\theoremstyle{plain} \newtheorem{theorem}{Theorem}[section]
\newtheorem{lemma}[theorem]{Lemma}
 \theoremstyle{definition}
\newtheorem{definition}[theorem]{Definition} \theoremstyle{remark}
\newtheorem{remark}[theorem]{Remark} \newtheorem{example}[theorem]{Example}
\newcommand{\R}{{\mathbb R}}
\newcommand{\Z}{{\mathbb Z}}
\newcommand{\N}{{\mathbb N}}
 \def\im{{\rm i}}
\newcommand{\C}{\mathbb{C}} \newcommand{\T}{\mathbb{T}}
\newcommand{\stz}{\mathrm{Stz}}
\def\({\left(}
\def\){\right)}
\def\<{\left\langle}
\def\>{\right\rangle}
\newcommand{\Pa}{\partial}
\newcommand{\bra}[1]{ \left<  #1 \right>  }
\newcommand{\braa}[1]{\left(  #1 \right)  }
\newcommand{\brab}[1]{\left\{ #1 \right\} }
\newcommand{\abs}[1]{ \left|  #1 \right|  }
\newcommand{\nr}[1]{  \left\| #1 \right\| }
\newcommand{\wa}{W^\ast}
\newcommand{\wc}{\widetilde{C}_N}
\newcommand{\ve}[2]{
\begin{pmatrix}
#1
\\
#2
\end{pmatrix}
}
\numberwithin{equation}{section}
\begin{document}

\title{Scattering and inverse scattering for nonlinear quantum walks}

\author {Masaya Maeda, Hironobu Sasaki, Etsuo Segawa, Akito Suzuki, Kanako Suzuki}

\maketitle

\begin{abstract}
We study large time behavior of quantum walks (QWs) with self-dependent (nonlinear) coin.
In particular, we show scattering and derive the reproducing formula for inverse scattering in the weak nonlinear regime. 
The proof is based on space-time estimate of (linear) QWs such as dispersive estimates and Strichartz estimate.
Such argument is standard in the study of nonlinear Schr\"odinger equations and discrete nonlinear Schr\"odinger equations but it seems to be the first time to be applied to QW.
\end{abstract}

\section{Introduction}
Discrete time quantum walks (QWs) are space-time discrete unitary dynamics which can be considered to be the quantum analog of classical random walks \cite{ABNVW01,FH10Book,Gudder88Book,Meyer96JSP}.
QWs are attracting increasing interest for several reasons such as quantum search algorithms \cite{AKR05Proc,Childs09PRL,Portugal13Book}, 
model to understand topological insulators \cite{AO13PRB,CGSVWW16JPA,EKOS17JPA,GNVW12CMP,Kitaev06AP,Kitagawa12QIP,KRBD10PRA}, simulator of Dirac particles coupled with gauge and gravitational fields \cite{AD16PRA, AMBD16PRA, MBD14PA, MD12JMP,MDB15PRE, SFP15EPJQT}.
Moreover, QWs have been realized experimentally by optical lattice \cite{Karski09Science}, photons \cite{SCPGJS11,Schreiberetal12Science} and ion trappping \cite{ZKGSBR10PRL} (see \cite{MW14Book} for more reference).

In this paper, we consider the QWs with state dependent (nonlinear) quantum coin.
We set 
\begin{align*}
\mathcal H :=  l^2(\Z;\C^2):= \{ u:\Z\to \C^2\ |\ \|u\|_{l^2}^2:=\sum_{x\in \Z} \|u\|_{\C^2}^2<\infty\},
\end{align*}
where $\|u\|_{\C^2}=(|u_1|^2+|u_2|^2)^{1/2}$ for $u={}^t(u_1,u_2)\in \C^2$.
We fix a map $C:\R\times \R\to U(2)$, where $U(2)$ is the set of $2\times 2$ unitary matrices.
We define the (nonlinear) quantum coin $\hat C:\mathcal H\to \mathcal H$ by 
\begin{equation}
\label{defC}
(\hat Cu)(x)=C(|u_1(x)|^2,|u_2(x)|^2) u(x),
\end{equation} where $u={}^t(u_1\ u_2)\in \mathcal H$.
For $(T_\pm u)(x)=u(x\mp 1)$ and $S=\begin{pmatrix} T_- & 0 \\ 0 & T_+ \end{pmatrix},$
we set
\begin{align}\label{2}
U:=S\hat C:\mathcal H\to\mathcal H.
\end{align}
%%%%%%%%%%%%%%%%%%%%%%%
By definition, $S$ and $\hat C$ preserve the $l^2$ norm, 
and so does $U$.  
Let $u_0 \in \mathcal{H}$
be an initial state for a walker. 
Then, the state $u(t)$ of the walker at time $t$ is defined by the recursion 
relation 
\begin{align}\label{3}
u(t) = U u(t-1), \quad t\in \N,\quad u(0)=u_0.
\end{align}
We define the nonlinear evolution operator  $U(t)$ by
\[
%\begin{equation}
%\label{nlevol} 
U(t)u_0 =u(t), \quad t \in \N_0:=\N\cup\{0\}. 
\]%\end{equation} 
%We define $U(t)u_0$ by the solution of 
%\begin{align}\label{3}
%U(U(t)u_0)=U(t+1)u_0,\quad U(0)u_0=u_0.
%\end{align}
Notice that $u_0 \mapsto u = U(\cdot)u_0$ is a nonlinear map from 
$\mathcal H$ to $l_t^\infty (\N;\mathcal H)$.

By nonlinear QWs, we mean the nonlinear evolution generated by $U(t)$.
If $C:\R\times \R\to U(2)$ is a constant function (i.e.\ $C(s_1,s_2)=C_0\in U(2)$ for all $s_1,s_2\in \R$), then we will call it linear (or simple) QWs

\begin{remark}
One can generalize $C$ by defining it as a function $\Z\times \R\times \R$ and setting $(\hat C u)(x):=C(x,|u_1(x)|^2,|u_2(x)|^2)u(x)$.
In this paper we will only consider QWs which depend on its state but not explicitly on its position.
\end{remark}

To the best of authors knowledge, nonlinear QWs was first proposed by Navarrete, P\'erez and Rold\'an  \cite{NPR07PRA} as an nonlinear generalization of optical Galton board.

\begin{example}[Navarrete, P\'erez and Rold\'an  \cite{NPR07PRA} (see also \cite{MDB15PRE})] \begin{align}\label{4.2}
C(s_1,s_2)=\frac{1}{\sqrt{2}}\begin{pmatrix} 1 & 1\\ 1 & -1 \end{pmatrix} \begin{pmatrix} e^{\im g s_1} & 0 \\ 0 & e^{\im g s_2}\end{pmatrix},
\end{align}
where $g\in \R$.
\end{example}

As pointed out by Navarrete, P\'erez and Rold\'an  \cite{NPR07PRA} themselves,  
this nonlinear evolution does not define a quantum system,
but it can be realized in a optical system such as optical Galton board. 
Notice that this is similar to the relation between (linear) Schr\"odinger equation which describes quantum system and nonlinear Schr\"odinger equations which appears in various regions of physics including optics.
Moreover, in a way similar to linear QWs, 
we can define
the finding probability $p_t(x)$ of a walker at time $t$ at position $x$ 
through $u(t,x) := (U(t)u_0)(x)$ as
\begin{equation}
\label{find_prob}
p_t(x) = \|u(t,x)\|_{\mathbb{C}^2}^2, 
	\quad (t,x) \in \N_0 \times \mathbb{Z},
\end{equation}
provided that  $\|u_0\|_\mathcal{H} = 1$.
Indeed, $p_t$ gives a probability distribution on $\mathbb{Z}$,
because $U(t)$ preserves the norm and
$\sum_{x \in \mathbb{Z}} p_t(x) = \|u_0\|_{\mathcal{H}}^2$.
From these reasons, it is natural to view the system described by $U(t)$
as a nonlinearization of a linear QWs
and thus we simply call it %the system described by $U(t)$
a nonlinear QWs. 
We also import terminology from QWs and call  $\mathcal{H}$ and vectors in $\mathcal{H}$ the state space and states, respectively.
We prove a weak limit theorem for the nonlinear QWs 
in a companion paper \cite{MSSSSwlt}.
%In particular, decomposing $C=\begin{pmatrix} c_{11} & c_{12}\\ c_{21} & c_{22}\end{pmatrix}$ as $C=P+Q$ where
%\begin{align}\label{5}
%P(x,s_1,s_2)=\begin{pmatrix} c_{11}(x,s_1,s_2) & c_{12}(x,s_1,s_2)\\ 0 & 0 \end{pmatrix},\  Q(x,s_1,s_2)=\begin{pmatrix} 0&0 \\ c_{21}(x,s_1,s_2) & c_{22}(x,s_1,s_2) \end{pmatrix},
%\end{align}
%we have
%\begin{align}\label{6}
%\(U u\)(x)=P(x+1,|u(x+1)|^2)u(x+1)+Q(x-1,|u(x-1)|^2)u(x-1).
%\end{align}

Other nonlinear QWs have been proposed by several authors as a simulator of nonlinear Dirac equation \cite{LKN15PRA}, for studying the nonlinear effect to the topologically protected mode \cite{GTB16PRA}, or simply investigating more rich dynamics (\cite{SWH14SR}).

\begin{example}[Lee, Kurzy\ifmmode~\acute{n}\else \'{n}\fi{}ski, and Nha \cite{LKN15PRA}]
The following models are proposed by the relation to nonlinear Dirac equations.
For Gross-Neveu model (scaler type interaction)
\begin{align}\label{4.4}
C(s_1,s_2) =\begin{pmatrix} e^{-\im g(s_1-s_2)} & 0 \\ 0 & e^{\im g(s_1-s_2)} \end{pmatrix}R(\theta),
\end{align}
and for Thirring model (vector type interaction)
\begin{align}\label{4.5}
C(s_1,s_2)=e^{\im g (s_1+s_2)}
R(\theta),
\end{align}
where $g,\theta\in\R$ and $R(\theta)=\begin{pmatrix} \cos\theta & -\sin\theta \\ \sin \theta & \cos \theta \end{pmatrix}$.
\end{example}

To construct a nonlinear QWs, it suffices to define $C:\R^2\to U(2)$.
Thus, the following nonlinear QWs is another natural example.
\begin{example}[Gerasimenko, Tarasinski and Beenakker \cite{GTB16PRA}]
Let $\theta:\R^2\to \R$.
We can define an nonlinear QWs by $C(s_1,s_2):=R(\theta(s_1,s_2))$.
In particular, setting $\theta(s_1,s_2):=\theta_0+g(s_1+\lambda s_2)^p$ with $\lambda=\pm1$, we obtain the nonlinear QWs with the nonlinear coin
\begin{align}\label{4.5.1}
C(s_1,s_2)=R(\theta_0+\lambda(gs_1+gs_2)^p)=R(\theta_0)R(\lambda(gs_1+gs_2)^p).
\end{align}
The particular nonlinear coin proposed in \cite{GTB16PRA} is the case $p=1$ and $\lambda=-1$.
\end{example}

In the following, we restrict our nonlinear coin operator to the following type:
\begin{align}
C(s_1,s_2) = C_0  C_N(gs_1,gs_2), \label{4.6}
\end{align}
where $g>0$ is a constant, 
\begin{align*}
C_0&=\begin{pmatrix} a & b \\-\bar b & \bar a \end{pmatrix}\in U(2),\  (|a|^2+|b|^2=1,\ 0<|a|<1),\\ C_N&\in C^2([0,\infty)\times [0,\infty);U(2)),
\end{align*}
and $C_N(0,0)=I_2$.
Here $I_2=\begin{pmatrix}  1 & 0 \\ 0 & 1 \end{pmatrix}$.
We set
\begin{align}\label{4.7}
U_0=S\hat C_0,\ \text{where}\ \(\hat C_0 u\)(x):=C_0u(x).
\end{align}

The positive parameter $g$ controls the strength of the nonlinearity.
Notice that all models \eqref{4.2}, \eqref{4.4}, \eqref{4.5} \eqref{4.5.1} given above are included in \eqref{4.6}.

In this paper, we view nonlinear QWs as space-time discretized nonlinear Schr\"odinger equations (NLS) and study the dynamical behavior of the walkers.
Indeed, we demonstrate that standard estimates such as dispersive estimate and Strichartz estimate hold for QWs (Theorem \ref{thm:1}, Lemma \ref{lem:stz}).
These estimates are fundamental tools for the study of NLS.
We show that also for nonlinear QWs, we can prove the scattering by parallel argument as the proof of scattering for NLS.
By scattering, we mean the following:

\begin{definition}\label{def:scatters}
We say $U(t)u_0$ scatters if there exists $u_+\in \mathcal H$ s.t.\ $\|U(t)u_0 - U_0^t u_+\|_{l^2}\to 0$ as $t\to \infty$, where $U_0=S C_0$.
\end{definition}

\begin{remark}
Scattering is equivalent to $$U(t)u_0 = U_0^t u_+  + \mathrm{Error}(t),\quad \|\mathrm{Error}(t)\|_{l^2}\to 0\ (t\to \infty).$$
Therefore, by scattering, we can conclude that the nonlinear QWs  behave similarly to linear QWs after long time.
However, $u_+$ will be generically different from $u_0$.
\end{remark}

We use $U_{g=1}(t)$ to denote the evolution $U(t)$
that has the nonlinear coin defined in \eqref{4.6} with $g=1$.
We observe that for $v_0 = \sqrt{g} u_0$ with $\|u_0\|_{l^2}=1$, 
\[ U(t)  u_0 = \frac{1}{\sqrt{g}} U_{g=1}(t) v_0. \]
%
%We note that it seems natural to consider the initial data with $\|u_0\|_{l^2}=1$.
%However, if we replace $g u_0=v_0$, then we have $U_{g=1}(t)v_0 = g U(t)u_0$.
Hence, instead of changing $g$, we can fix $g=1$ and vary the norm or $\|u_0\|_{l^2}$. 
Small $\|u_0\|_{l^2}$ will correspond to small $g$.
In the following, we will always fix $g=1$.

The first main result in this paper is the following:

\begin{theorem}\label{thm:scat}
Assume that $C_N\in C^1(\R^2;U(2))$ and there exists $c_0>0$ s.t.\ $\|C_N(s_1,s_2)-I_2\|_{\C^2\to \C^2}\leq c_0  (s_1+s_2)^{m}$ and  $\|\partial_{s_j}C_N(s_1,s_2)\|_{\C^2\to \C^2}\leq c_0  (s_1+s_2)^{m-1}$ for $j=1,2$.
Here, $\|A\|_{\C^2\to \C^2}$ is the operator norm of the matrix $A$.
That is $\|A\|_{\C^2\to \C^2}:=\sup_{v\in \C^2, \|v\|_{\C^2}=1}\|Av\|_{\C^2}$.
\begin{enumerate}
\item
For the case $m=3$, there exists $\delta>0$ s.t.\ for any $u_0\in l^2$ with $\|u_0\|_{l^2}<\delta$, $U(t)u_0$ scatters.
\item
For the case $m=2$, there exists $\delta>0$ s.t.\ for any $u_0\in l^1$ with $\|u_0\|_{l^1}<\delta$, $U(t)u_0$ scatters.
\end{enumerate}
\end{theorem}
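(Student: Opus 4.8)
The plan is to recast the recursion \eqref{3} as a discrete Duhamel formula and treat the nonlinear coin as a perturbation of $U_0$. Writing $U=U_0\hat C_N$ with $(\hat C_N u)(x)=C_N(|u_1(x)|^2,|u_2(x)|^2)u(x)$ and setting $N(u):=(\hat C_N-I)u$, so that $(N(u))(x)=(C_N(|u_1(x)|^2,|u_2(x)|^2)-I_2)u(x)$ and $Uu=U_0u+U_0N(u)$, iteration of \eqref{3} yields
\[
u(t)=U_0^t u_0+\sum_{s=0}^{t-1}U_0^{t-s}N(u(s)),\qquad t\in\N_0.
\]
Since $U_0$ is unitary, applying $U_0^{-t}$ shows that $U(t)u_0$ scatters (Definition \ref{def:scatters}) as soon as $\sum_{s\ge0}U_0^{-s}N(u(s))$ converges in $l^2$; one then sets $u_+=u_0+\sum_{s\ge0}U_0^{-s}N(u(s))$ and controls the tail $\|U(t)u_0-U_0^t u_+\|_{l^2}=\|\sum_{s\ge t}U_0^{-s}N(u(s))\|_{l^2}$. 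The hypothesis on $C_N$ gives the pointwise bound $\|N(u)(x)\|_{\C^2}\le c_0\|u(x)\|_{\C^2}^{2m+1}$, so everything reduces to a priori space-time bounds on the sequence $u(t)$. Note that no local existence argument is needed: the iteration defines $u(t)$ for all $t$ with $\|u(t)\|_{l^2}=\|u_0\|_{l^2}$ conserved, so the entire content is the decay/integrability of the nonlinear term. For the estimates below only this size bound is used; the derivative bound on $C_N$ supplies the matching Lipschitz estimates if one prefers to produce these a priori bounds by a contraction mapping.

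For part (1), $m=3$, the nonlinearity has order $2m+1=7$, which is exactly the mass-critical power $1+2/\sigma=7$ associated with the decay exponent $\sigma=1/3$ of Theorem \ref{thm:1}. I would therefore work in the critical Strichartz space $l^8_{t,x}$, the pair $(8,8)$ being admissible in Lemma \ref{lem:stz}. H\"older gives $\|N(u)\|_{l^{8/7}_{t,x}}\le c_0\|u\|_{l^8_{t,x}}^{7}$, so the homogeneous and retarded Strichartz estimates applied to the Duhamel formula yield, with $A(T):=\|u\|_{l^8([0,T]\times\Z)}$, the inequality
\[
A(T)\le C\|u_0\|_{l^2}+C\,A(T)^{7}.
\]
Since $A(T)<\infty$ for each finite $T$ and $A(T{+}1)^8-A(T)^8=\|u(T{+}1)\|_{l^8_x}^8\le\|u_0\|_{l^2}^8$, a continuity/bootstrap argument in $T$ closes this for $\|u_0\|_{l^2}<\delta$ small and gives $A(\infty)\lesssim\|u_0\|_{l^2}$. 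Scattering then follows from the dual homogeneous Strichartz estimate, $\|\sum_{s\ge t}U_0^{-s}N(u(s))\|_{l^2}\lesssim\|N(u)\|_{l^{8/7}([t,\infty)\times\Z)}\to0$ as the tail of a finite quantity.

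For part (2), $m=2$, the order is $2m+1=5$, which is mass-\emph{sub}critical, so $l^2$ smallness cannot close and one works with $l^1$ data and the pointwise decay of Theorem \ref{thm:1}. The hard part is that the naive estimate $\|N(u(s))\|_{l^1}\le c_0\|u(s)\|_{l^\infty}^{3}\|u(s)\|_{l^2}^{2}\lesssim(1+s)^{-1}$, inserted into the dispersive Duhamel bound, produces the logarithmically divergent convolution $\sum_s(1+t-s)^{-1/3}(1+s)^{-1}\sim(1+t)^{-1/3}\log t$. I would avoid this borderline loss by retaining the $l^1$ norm instead, using $\|N(u(s))\|_{l^1}\le c_0\|u(s)\|_{l^\infty}^{4}\|u(s)\|_{l^1}$, and running a coupled bootstrap on
\[
M_\infty(T)=\sup_{0\le t\le T}(1+t)^{1/3}\|u(t)\|_{l^\infty_x},\qquad M_1(T)=\sup_{0\le t\le T}\|u(t)\|_{l^1_x}.
\]
Because $U_0$ preserves the $l^1$ norm and $\sum_s(1+s)^{-4/3}<\infty$, the $l^1$-Duhamel estimate gives $M_1\le\|u_0\|_{l^1}+c_0CM_\infty^{4}M_1$; and since $4/3>1$ the convolution $\sum_s(1+t-s)^{-1/3}(1+s)^{-4/3}\lesssim(1+t)^{-1/3}$ now carries no log, so the dispersive estimate gives $M_\infty\le C\|u_0\|_{l^1}+c_0CM_\infty^{4}M_1$. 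For $\|u_0\|_{l^1}<\delta$ small this system closes, yielding $M_\infty,M_1\lesssim\delta$ and in particular $\|u(t)\|_{l^\infty}\lesssim(1+t)^{-1/3}$. Finally $\|N(u(s))\|_{l^2}\le c_0\|u(s)\|_{l^\infty}^{4}\|u(s)\|_{l^2}\lesssim(1+s)^{-4/3}$ is summable, so the series defining $u_+$ converges in $l^2$ and the tail estimate gives scattering.

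In short, the principal obstacle differs across the two regimes. In (1) it is that the problem is exactly mass-critical, leaving no subcriticality to exploit, so the argument must be run at the scaling-critical Strichartz level $l^8_{t,x}$ and rely solely on smallness of $\|u_0\|_{l^2}$. In (2) it is the logarithmic borderline in the $l^1\!\to\! l^\infty$ dispersive Duhamel estimate, circumvented by keeping the $l^1$ norm together with the conservation $\|U_0^t\cdot\|_{l^1}=\|\cdot\|_{l^1}$ rather than trading everything for the conserved $l^2$ norm.
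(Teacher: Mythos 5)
Your treatment of part (1) is sound and is essentially the paper's own argument: a small-data Strichartz bootstrap for the septic nonlinearity. You run it at the diagonal admissible pair $(8,8)$, which is covered by the interpolation family $(p_\theta,q_\theta)$ appearing in the proof of Lemma \ref{lem:stz}, whereas the paper sets up a contraction in $\mathrm{Stz}$ and places six factors of the nonlinearity in $l^6_tl^\infty_x$ and one in $l^\infty_tl^2_x$; the two bookkeepings are interchangeable.

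Part (2), however, contains a genuine gap. Your coupled bootstrap rests on the assertion that ``$U_0$ preserves the $l^1$ norm,'' and this is false. The shift $S$ moves the two components of $u$ in opposite directions, so it does not preserve $\|u(x)\|_{\C^2}$ pointwise; more fundamentally, no dispersive evolution is uniformly bounded on $l^1$: $U_0^t\delta_{1,0}$ spreads its unit $l^2$ mass over roughly $2|a|t$ sites with typical amplitude $t^{-1/2}$, so $\|U_0^t\delta_{1,0}\|_{l^1}\sim t^{1/2}\to\infty$ (the same reason $e^{\im t\Delta}$ is unbounded on $L^1$). Consequently the inequality $M_1\le\|u_0\|_{l^1}+c_0CM_\infty^4M_1$ already fails for its linear term $\|U_0^tu_0\|_{l^1}$, the $(M_\infty,M_1)$ system does not close, and you are thrown back onto the bound $\|N(u(s))\|_{l^1}\lesssim M_\infty^3\delta^2(1+s)^{-1}$ with the logarithmic loss you correctly identified. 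The paper's way out is not the $l^1$ norm but an improved \emph{linear} decay estimate: Theorem \ref{thm:mp} gives $\|U_0^tu_0\|_{l^{4,\infty}}\le C\langle t\rangle^{-1/4}\|u_0\|_{l^1}$ by a finer stationary-phase analysis near the degenerate points of $p'$, and interpolation with the $t^{-1/3}$ bound of Theorem \ref{thm:1} yields $\|U_0^t\|_{l^1\to l^5}\le C\langle t\rangle^{-4/15}$. Since $5\cdot\tfrac{4}{15}=\tfrac43>1$, the induction of Lemma \ref{lem:qdec} on $\langle t\rangle^{4/15}\|u(t)\|_{l^5}$ closes with no logarithm, and scattering follows as you describe. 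To salvage your scheme you would have to replace the $l^1$ component of your bootstrap by such an intermediate norm ($l^5$ with rate $t^{-4/15}$, or $l^{4,\infty}$ with rate $t^{-1/4}$), which requires proving the additional dispersive estimate of Theorem \ref{thm:mp}.
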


Theorem \ref{thm:scat} tells us that if $u_0$ is sufficiently small (or with fixed $u_0$, $g$ is sufficiently small) the dynamics of nonlinear QWs will be similar to the dynamics of linear QWs.

We next consider the inverse scattering problem, 
which is the problem of identifying unknown nonlinear terms 
under the assumption that 
all of the scattering states are known.
More precisely, 
we identify some values concerning to $C_N$ 
using the scattering data $(u_0,W^\ast u_0)$.
Here,  
$W^\ast u_0$ is the final data $u_+$ appearing 
in Definition \ref{def:scatters}. 
Such problems naturally arise when one has only partial information of the system.
In this case, one would like to reconstruct the parameters governing the system from the data which one can observe.
In application, we usually do not have complete information of the system.
Therefore it is important to consider inverse scattering problems.
As for inverse scattering problems  
for some nonlinear Schr\"odinger equations and related equations,
there are many papers (see, e.g., 
\cite{%
CarlesGallagher2009,
MorawetzStrauss1973,
Sasaki2012,
Sasaki2015,
SasakiSuzuki2011,
Strauss1974,
Weder1997} 
and references therein).  
Using Theorem \ref{thm:scat} 
and modifying methods in the above papers, 
we obtain a reproducing formula for the nonlinear coin.

For simplicity, we consider the case that $C_N$ can be expressed as $C_N(s_1,s_2)=\tilde C_N(s_1^2,s_2^2)$ with 
\begin{align}\label{A}
\tilde C_N\in C^2(\R^2;U(2))\text{ and }\|\tilde C_N(s_1,s_2)-I_2\|_{\C^2\to \C^2}\leq C  |s_1|+|s_2|.
\end{align}

We define $\delta_{j,x}\in l^1(\Z;\C^2)$ by $\delta_{j,x}(y)= e_j$ if $y=x$ and $\delta_{j,x}(y)=0$ if $y\neq x$ where $e_1={}^t(1\ 0)$ and $e_2={}^t(0\ 1)$.
Further, for $g:\R_+\to \C$, we define $D_\lambda g(\lambda)=\lambda^{-1} \(g(2\lambda)-g(\lambda)\)$.
We define the nonlinear operator  
$W^\ast:u_0 \mapsto W^\ast u_0$.
It follows from the proof of Theorem \ref{thm:scat} that 
$W^\ast$ 
is well defined on $\{u_0\in l^1\ |\ \|u_0\|_{l^1}<\delta\}$  
and  satisfies 
\begin{align*}
W^* u_0 = u_0 + \sum_{t=0}^\infty U_0^{-t}\(\hat C_N-I_2\) U(t)u_0.
\end{align*}

\begin{theorem}[Inverse scattering]\label{thm:invscat}
Assume $(\mathrm{A})$ and that $\lambda>0$ is sufficiently small.
Then, we have
\begin{align*}
\left\| 
\begin{pmatrix} \mathcal L_{11}(\lambda)-\lambda D_\lambda \mathcal L_{11}(\lambda)&  D_\lambda \mathcal L_{11}(\lambda)\\
 \mathcal L_{12}(\lambda)-\lambda D_\lambda \mathcal L_{12}(\lambda)& D_\lambda \mathcal L_{12}(\lambda)
\end{pmatrix} -\partial_{1}\tilde C_N(0,0)\right\|_{\C^2\to \C^2}\leq C  \lambda^3,
\end{align*}
and
\begin{align*}
\left\| 
\begin{pmatrix} D_\lambda \mathcal L_{21}(\lambda)& 
\mathcal L_{21}(\lambda) -\lambda D_\lambda \mathcal L_{21}(\lambda)\\ D_\lambda \mathcal L_{22}(\lambda)&  \mathcal L_{22}(\lambda)-\lambda D_\lambda \mathcal L_{22}(\lambda)
\end{pmatrix} -\partial_{2}\tilde C_N(0,0)\right\|_{\C^2\to \C^2}\leq C  \lambda^3,
\end{align*}
where
\begin{align*}
\mathcal L_{1j}&=\lambda^{-10}\<\(U_0^{-1}W^* U_0-W^*\) \(\lambda^2 \delta_{1,0}+\lambda^3 \delta_{2,0}\), \delta_{j,0}\>,\\
\mathcal L_{2j}&=\lambda^{-10}\<\(U_0^{-1}W^* U_0-W^*\) \(\lambda^3 \delta_{1,0}+\lambda^2 \delta_{2,0}\), \delta_{j,0}\>.
\end{align*}
\end{theorem}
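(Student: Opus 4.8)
The plan is to expand the operator $W^*$ to leading order in the small data, to observe that the combination $U_0^{-1}W^*U_0-W^*$ annihilates the nonlocal part of this expansion and retains only the pointwise quintic term, and finally to use $D_\lambda$ as a discrete differentiation that reads off the Taylor coefficients of $\tilde C_N$ at the origin.

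First I would expand $W^*$. By assumption $(\mathrm{A})$, the pointwise map $\hat C_N-I_2$ has the expansion $((\hat C_N-I_2)v)(x)=\mathcal{N}(v)(x)+O(\|v(x)\|_{\C^2}^9)$, where
\begin{align*}
\mathcal{N}(v)(x)=\bigl(\partial_1\tilde C_N(0,0)\,|v_1(x)|^4+\partial_2\tilde C_N(0,0)\,|v_2(x)|^4\bigr)v(x)
\end{align*}
is homogeneous of degree five (note that $C_N(|v_1|^2,|v_2|^2)=\tilde C_N(|v_1|^4,|v_2|^4)$, so the two derivatives enter through $|v_j|^4$, and the next Taylor term of $\tilde C_N$ already sits at degree nine). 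Inserting this into the reproducing formula $W^*u_0=u_0+\sum_{t\ge 0}U_0^{-t}(\hat C_N-I_2)U(t)u_0$ and replacing the nonlinear flow $U(t)u_0$ by the free flow $U_0^t u_0$ — the difference being itself quintic, hence feeding into $\mathcal{N}$ only at degree nine — I obtain
\begin{align*}
W^*u_0=u_0+\Lambda(u_0)+O(\|u_0\|_{l^1}^9),\qquad \Lambda(u_0):=\sum_{t=0}^\infty U_0^{-t}\mathcal{N}(U_0^t u_0).
\end{align*}
Convergence of the series and the size of the remainder are the very estimates that drive the proof of Theorem \ref{thm:scat}: the dispersive decay of $U_0^t$ (Theorem \ref{thm:1}) makes $\|\mathcal{N}(U_0^t u_0)\|_{l^2}$ summable in $t$, and the Strichartz bound (Lemma \ref{lem:stz}) controls the degree-nine corrections.

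The key algebraic step is a telescoping identity. Since $U_0^s(U_0u_0)=U_0^{s+1}u_0$, conjugation by $U_0$ merely shifts the summation index, so that $U_0^{-1}\Lambda(U_0u_0)=\Lambda(u_0)-\mathcal{N}(u_0)$; the nonlocal sum cancels and only the local $t=0$ term remains, giving
\begin{align*}
(U_0^{-1}W^*U_0-W^*)u_0=-\mathcal{N}(u_0)+O(\|u_0\|_{l^1}^9).
\end{align*}
Evaluating on $u_0=\lambda^2\delta_{1,0}+\lambda^3\delta_{2,0}$, the vector $\mathcal{N}(u_0)$ is supported at $x=0$ with $u_0(0)={}^t(\lambda^2,\lambda^3)$, so that $\mathcal{N}(u_0)(0)=\bigl(\lambda^8\partial_1\tilde C_N(0,0)+\lambda^{12}\partial_2\tilde C_N(0,0)\bigr){}^t(\lambda^2,\lambda^3)$; pairing with $\delta_{j,0}$ and multiplying by $\lambda^{-10}$ then yields, up to an overall sign,
\begin{align*}
\mathcal{L}_{1j}(\lambda)=\bigl(\partial_1\tilde C_N(0,0)\bigr)_{j1}+\bigl(\partial_1\tilde C_N(0,0)\bigr)_{j2}\,\lambda+\bigl(\partial_2\tilde C_N(0,0)\bigr)_{j1}\,\lambda^4+\bigl(\partial_2\tilde C_N(0,0)\bigr)_{j2}\,\lambda^5+O(\lambda^8),
\end{align*}
and the analogous expression for $\mathcal{L}_{2j}$ with data $\lambda^3\delta_{1,0}+\lambda^2\delta_{2,0}$, which interchanges the roles of the two columns. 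The exponents $2,3$ and the weight $\lambda^{-10}$ are chosen precisely so that the entries of $\partial_1\tilde C_N(0,0)$ sit at orders $\lambda^0,\lambda^1$ and those of $\partial_2\tilde C_N(0,0)$ at orders $\lambda^4,\lambda^5$.

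It remains to separate these coefficients, which is exactly what $D_\lambda$ does: for $g(\lambda)=c_0+c_1\lambda+c_4\lambda^4+c_5\lambda^5+O(\lambda^8)$ one computes $D_\lambda g(\lambda)=c_1+15c_4\lambda^3+O(\lambda^4)$ and $g(\lambda)-\lambda D_\lambda g(\lambda)=c_0+O(\lambda^4)$. Assembling the four scalar combinations into the two $2\times 2$ matrices reproduces $\partial_1\tilde C_N(0,0)$ and $\partial_2\tilde C_N(0,0)$ up to an error $O(\lambda^3)$; this $O(\lambda^3)$ is not a remainder artefact but the genuine cross term $15c_4\lambda^3$, i.e. the contribution of the other derivative divided by $\lambda$ in $D_\lambda$, so the order in the statement is structural. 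Because $D_\lambda$ is a finite difference rather than a derivative, it acts on the $O(\lambda^8)$ remainder without any smoothness assumption, turning it into $O(\lambda^7)$, which is negligible. I expect the main obstacle to be precisely this uniform-in-$\lambda$ remainder control: one must check that both the replacement $U(t)u_0\rightsquigarrow U_0^t u_0$ and the truncation of the Taylor series of $\tilde C_N$ produce $l^2$-errors of size $O(\|u_0\|_{l^1}^9)=O(\lambda^{18})$, so that after the weight $\lambda^{-10}$ and the division by $\lambda$ in $D_\lambda$ they remain well below the structural $O(\lambda^3)$ — this is where the smallness threshold $\delta$ of Theorem \ref{thm:scat} and the $C^2$ regularity in $(\mathrm{A})$ are genuinely used.
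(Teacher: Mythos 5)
Your proposal is correct and follows essentially the same route as the paper: linearize $W^*$ around the free flow with an $O(\|u_0\|_{l^1}^9)$ error (Lemma~\ref{lem:2:thm:is:1}), telescope via conjugation by $U_0$ so that only the local term $\hat C_N-I_2$ survives, Taylor-expand $\tilde C_N$ on the data $\lambda^2\delta_{1,0}+\lambda^3\delta_{2,0}$, and use the finite difference $D_\lambda$ to separate the matrix entries, with the genuine $O(\lambda^3)$ cross term $15c_4\lambda^3$ identified exactly as in the paper. The overall minus sign you flag in $(U_0^{-1}W^*U_0-W^*)u_0=-\mathcal{N}(u_0)+\cdots$ is real: the paper's own proof drops it when passing from the two telescoped identities to the displayed bound on $\langle(U_0^{-1}W^*U_0-W^*)u_0,v_0\rangle-\langle(\hat C_N-I_2)u_0,v_0\rangle$, so your bookkeeping is, if anything, the more careful one.
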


We note that Theorem \ref{thm:invscat} tells us that we can partially reconstruct the nonlinear coin from the information of the scattering states.
For example, the nonlinear coin is given by
\begin{align*}%\label{example:invscat}
\tilde C_N(s_1,s_2)=\begin{pmatrix} e^{\im g_1 s_1} & 0 \\ 0 & e^{\im g_2 s_2} \end{pmatrix},
\end{align*}
with some constants $g_1,g_2$, then we can recover
\begin{align*}
\tilde C_N(s_1,s_2)=I_2+\begin{pmatrix} \im g_1 s_1 & 0 \\ 0 & \im g_2 s_2 \end{pmatrix} + O(s_1^2+s_2^2),
\end{align*}
which will be the 1st order approximation of $\tilde C_N(s_1,s_2)$.
Therefore, we can identify the constants $g_1,g_2$ in the case.

The paper is organized as follows.
In section 2, we prove the dispersive estimate and Strichartz estimate for QWs with constant coin.
In section 3, we prove Theorem \ref{thm:scat}.
In section 4, we prove Theorem \ref{thm:invscat}.

\section{Dispersive and Strichartz estimates}

We first derive the dispersive estimate for the linear evolution $U_0=S\hat C_0$ by using stationary phase method.
We note that this dispersive estimate was first obtained by Sunada and Tate \cite{ST12JFA} in a slightly different form.

We define the (discrete) Fourier transform by
\begin{align}\label{9}
\mathcal F u (\xi):=\sum_{x\in \Z} e^{-\im x \xi}u(x),\quad \xi\in \T:=\R/2\pi \Z,
\end{align}
and 
 the inverse Fourier transform by
\begin{align}\label{10}
\mathcal F^{-1}f(x):=\frac{1}{2\pi}\int_\T e^{\im x \xi}f(\xi)\,d\xi.
\end{align}
Since $S\hat C_0u (x)= P_0 u(x+1)+Q_0u(x+1)$ where $P_0=\begin{pmatrix} a & b \\ 0 & 0 \end{pmatrix}$ and $Q_0 = \begin{pmatrix} 0 & 0 \\ -\bar b & \bar a \end{pmatrix}$, we have
\begin{align*}
\mathcal F(U_0u)(\xi)&=\sum_{x\in \Z}e^{-\im x \xi}\(P_0 u(x+1)+Q_0u(x-1)\)=\sum_{x\in \Z}e^{-\im x \xi}\(e^{\im \xi} P_0 +e^{-\im \xi}Q_0\)u(x)\\&=(e^{\im \xi} P_0 +e^{-\im \xi}Q_0)\mathcal F u(\xi).
\end{align*}
Notice that
\begin{align}\label{11}
\hat U_0(\xi) := e^{\im \xi}P_0+ e^{-\im \xi}Q_0=\begin{pmatrix}e^{\im \xi}a && e^{\im \xi}b\\ - \overline{e^{\im \xi}b} && \overline{e^{\im \xi}a} \end{pmatrix},
\end{align}
is also unitary and the
 eigenvalues are given by
\begin{align}\label{12}
\lambda_\pm(\xi)=w(\xi)\pm \im\sqrt{1-w(\xi)^2}=:e^{\pm\im \tilde p(\xi)},\quad w(\xi)=\mathrm{Re}(e^{\im \xi}a).
\end{align}
Thus diagonalizing $\hat U(\xi)$, we have
\begin{align}\label{12.0}
U_0=\mathcal F^{-1} P^{-1}(\xi)\mathrm{exp}\(\im \begin{pmatrix} \tilde p(\xi) & 0 \\ 0 & - \tilde p(\xi)\end{pmatrix}\)P(\xi)\mathcal F,
\end{align}
where
\begin{align}\label{12.01}
P(\xi)=\frac{1}{|b|^2+|e^{\im \xi}a-\lambda_+(\xi)|^2}
\begin{pmatrix}
-e^{-\im \xi}\bar b & -e^{-\im \xi}a + \lambda_-(\xi) \\ -e^{\im \xi} a + \lambda_+(\xi) & -e^{\im \xi} b
\end{pmatrix}.
\end{align}
We set $a=|a|e^{\im \theta_a}$.
Then, since we have $0<\tilde p<\pi$ by \eqref{12}, setting
\begin{align}\label{12.1}
p(\xi)=\mathrm{arccos}\(|a|\cos (\xi)\),
\end{align}
we have $\tilde p(\xi)=p(\xi+\theta_a)$.
Differentiating \eqref{12.1}, we obtain
\begin{align}\label{12.2}
p'(\xi) &= \frac{|a|\sin \xi}{\sqrt{1-|a|^2\cos^2\xi}},\\
p''(\xi)&=|a|(1-|a|^2)\frac{\cos\xi}{(1-|a|^2\cos^2\xi)^{3/2}},\label{12.3.2}\\
p'''(\xi)&=-|a|(1-|a|^2)\(1+2|a|^2\cos^2 \xi\)\frac{\sin \xi}{(1-|a|^2\cos^2\xi)^{5/2}}.\label{12.3.3}
\end{align}
By \eqref{12.0}, we have
\begin{align}
\(U_0^tu_0\)(x)&=
\(\(\frac{1}{2\pi}\int_{\T}P^{-1}(\xi)\exp\({\im t \begin{pmatrix} p(\xi+\theta_a)+\frac{\cdot}{t}\xi & 0 \\ 0 & -p(\xi+\theta_a)+\frac{\cdot}{t}\xi \end{pmatrix}}\) P(\xi)\,d\xi\)* u_0\)(x),\label{12.3.0}
\end{align}
where $A*u(x):=\sum_{y\in\Z} A(x-y)u(y)$.
We set the projections $P_\pm$ by
\begin{align*}
P_+:=\mathcal F^{-1}P^{-1}(\xi)\begin{pmatrix} 1 & 0 \\ 0 & 0 \end{pmatrix}P(\xi)\mathcal F,
\end{align*}
and $P_-=1-P_+$.
We define
\begin{align}\label{100}
I_\pm(t,s) = \frac{1}{2\pi}\int_\T e^{\im t\(\pm p(\xi)+s(\xi-\theta_a)\)}Q_\pm(\xi)\,d\xi,
\end{align}
where
\begin{align*}
Q_+=P^{-1}(\xi-\theta_a)\begin{pmatrix} 1 & 0 \\ 0 & 0 \end{pmatrix} P(\xi-\theta_a),\quad Q_-=P^{-1}(\xi-\theta_a)\begin{pmatrix} 0 & 0 \\ 0 & 1 \end{pmatrix} P(\xi-\theta_a).
\end{align*}
Then, we can express the generator by
\begin{align}\label{101}
U_0^t u_0 = \sum_{\pm}U_0^t P_\pm u_0=\sum_{\pm} I_\pm(t,\frac{\cdot}{t})*u_0.
\end{align}

The following is the dispersive estimate for QW.

\begin{theorem}\label{thm:1}
Let $0<|a|<1$.
Then, there exists $C>0$ such that for all $t\geq 1$,
\begin{align}\label{12.3}
\|U_0^t u \|_{l^\infty} \leq C t^{-1/3}\|u\|_{l^1}.
\end{align}
\end{theorem}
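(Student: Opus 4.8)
The plan is to reduce \eqref{12.3} to a uniform-in-$s$ bound on the oscillatory integrals $I_\pm$ and then extract the $t^{-1/3}$ decay from the van der Corput lemma. Starting from the representation \eqref{101}, $U_0^t u=\sum_\pm I_\pm(t,\tfrac{\cdot}{t})*u$, and since the convolution kernel $z\mapsto I_\pm(t,z/t)$ is a genuine $\C^2\to\C^2$ matrix kernel, Young's inequality gives $\|U_0^t u\|_{l^\infty}\le\sum_\pm\sup_{s}\|I_\pm(t,s)\|_{\C^2\to\C^2}\,\|u\|_{l^1}$. Thus it suffices to prove $\sup_s\|I_\pm(t,s)\|_{\C^2\to\C^2}\le Ct^{-1/3}$ for the integrals \eqref{100}, whose phase is $\psi_\pm(\xi)=\pm p(\xi)+s(\xi-\theta_a)$ and whose amplitude $Q_\pm$ is a fixed smooth matrix function on $\T$. (Smoothness of $Q_\pm$ is guaranteed by $0<|a|<1$: then $|w(\xi)|\le|a|<1$ in \eqref{12}, so the eigenvalues $\lambda_\pm$ stay distinct and, since $|b|^2=1-|a|^2>0$, the matrix $P(\xi)$ in \eqref{12.01} is smooth and invertible.)

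The decisive observation is that the parameter $s$ enters only through the linear term, so $\psi_\pm''=\pm p''$ and $\psi_\pm'''=\pm p'''$ are \emph{independent of $s$}; hence any estimate controlled by these two derivatives is automatically uniform in $s$. From \eqref{12.3.2} and \eqref{12.3.3}, $p''$ vanishes exactly where $\cos\xi=0$ (i.e.\ $\xi=\pm\pi/2$), while $p'''$ vanishes exactly where $\sin\xi=0$ (i.e.\ $\xi=0,\pi$). Therefore $p''$ and $p'''$ never vanish simultaneously, and by compactness of $\T$ there is a constant $c_0>0$ with $|p''(\xi)|+|p'''(\xi)|\ge c_0$ for every $\xi\in\T$.

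With this non-degeneracy I cover $\T$ by finitely many intervals carrying a subordinate smooth partition of unity $\{\chi_j\}$, arranged so that on each $\mathrm{supp}\,\chi_j$ one has either $|p''|\ge c_0/2$ or $|p'''|\ge c_0/2$. Writing $I_\pm(t,s)=\sum_j\frac{1}{2\pi}\int_\T e^{\im t\psi_\pm}\chi_j Q_\pm\,d\xi$, I apply van der Corput to each piece: on intervals with $|\psi_\pm''|=|p''|\ge c_0/2$ the second-derivative estimate gives $O(t^{-1/2})$, while on the intervals around $\xi=\pm\pi/2$, where $p''$ may be small but $|\psi_\pm'''|=|p'''|\ge c_0/2$, the third-derivative estimate gives $O(t^{-1/3})$. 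In both cases the implied constant depends only on $c_0$ and on $\|\chi_j Q_\pm\|_{\infty}+\|(\chi_j Q_\pm)'\|_{L^1(\T)}$, which are finite and independent of $s$. Since there are finitely many pieces and $t^{-1/3}\ge t^{-1/2}$ for $t\ge1$, summation yields $\sup_s\|I_\pm(t,s)\|_{\C^2\to\C^2}\le Ct^{-1/3}$, which is \eqref{12.3}.

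The main obstacle is exactly the degenerate critical points at $\xi=\pm\pi/2$: there the phase fails to be of nondegenerate (Morse) type, so the Schr\"odinger-like rate $t^{-1/2}$ is unavailable, and one must verify that the degeneracy is no worse than cubic (that is, $p'''\ne0$ at those points) in order to recover $t^{-1/3}$. Everything else is routine once the uniformity in $s$ is noted; the only care required is to record a version of the van der Corput lemma whose constant is controlled by the stated lower bounds on $p''$ and $p'''$ together with the fixed $C^1$ norm of the amplitude $Q_\pm$.
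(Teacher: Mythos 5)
Your proposal is correct and follows essentially the same route as the paper: reduce via \eqref{101} and Young's inequality to a bound on $I_\pm(t,s)$ uniform in $s$, note that $\psi_\pm''=\pm p''$ and $\psi_\pm'''=\pm p'''$ do not depend on $s$, split $\T$ by a partition of unity according to whether $|p''|$ or $|p'''|$ is bounded below, and apply van der Corput with $k=2$ or $k=3$. The only cosmetic difference is how the simultaneous non-vanishing is established --- you locate the zeros of $p''$ and $p'''$ at $\cos\xi=0$ and $\sin\xi=0$ respectively and invoke compactness, whereas the paper derives the explicit quantitative bound $\bigl(\tfrac{1+2|a|^2}{1-|a|^2}p''\bigr)^2+(p''')^2\ge|a|^2(1-|a|^2)^2$ --- but both yield the same conclusion.
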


\begin{remark}
The constant $C$ given in Theorem \ref{thm:1} depends on $a$.
Moreover, the estimate \eqref{12.3} will not hold for $|a|=0,1$.
\end{remark}

\begin{proof}
The proof is similar to Theorem 3 of \cite{SK05N}.
By \eqref{101}, it suffices to show that for $t\geq 1$,
\begin{align}\label{12.3.1}
\sup_{s\in\R}\left\|  I_\pm(t,s) \right\|_{\C^2\to \C^2}\leq C\max_{1\leq i,j\leq 2}\sup_{s\in\R}|I_{\pm,ij}(t,s)|\leq C t^{-1/3},
\end{align}
where $I_\pm$ are given in \eqref{100} and $I_{\pm,ij}$ are the $(i,j)$ matrix component of $I_{\pm}$.
Here, we remark that $C$ in the middle of \eqref{12.3.1} and in the right hand side of \eqref{12.3.1} is different (so we are not claiming $\max_{1\leq i,j\leq 2}\sup_{s\in\R}|I_{\pm,ij}(t,s)|\leq  t^{-1/3}$).
We will use such conventions frequently.

From \eqref{12.3.2} and \eqref{12.3.3},
\begin{align*}
\(\frac{1+2|a|^2}{1-|a|^2}p''(\xi)\)^2+(p'''(\xi))^2&\geq
\(\frac{1+2|a|^2\cos^2\xi}{1-|a|^2\cos^2\xi}p''(\xi)\)^2+(p'''(\xi))^2\\&=|a|^2(1-|a|^2)^2\frac{\(1+2|a|^2\cos^2\xi\)^2}{(1-|a|^2\cos^2\xi)^{5}}\geq |a|^2(1-|a|^2)^2.
\end{align*}
This implies
$
\min_{\xi \in \T}(|p''(\xi)|, |p'''(\xi)|)>0.
$
Therefore, we can set $\psi_l\in C^\infty$ ($l=1,2$) s.t.\ $\psi_1(\xi)+\psi_2(\xi)=1$, $|p''(\xi)|\geq \delta$ for $\xi \in \mathrm{supp}\psi_1$ and $|p'''(\xi)|\geq \delta$ for $\xi \in \mathrm{supp}\psi_2$.
Now, Theorem \ref{thm:1} follows from Van der Corput lemma:

\begin{lemma}[Van der Corput lemma]\label{lem:1}
Let $\psi\in C^\infty$ and $k\geq 2$ and $|q^{(k)}(\xi)|\geq \delta$ in $\xi \in \mathrm{supp} \psi$.
Then, there exists $C>0$ (independent of $\delta>0$) s.t.\ we have
\begin{align}\label{12.3.5}
\left| \int_{\T}e^{\im t q(\xi)}\psi(\xi) \,d\xi\right|\leq C  (t \delta)^{-1/k},
\end{align}
for all $t>0$.
\end{lemma}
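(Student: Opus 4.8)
The plan is to prove the amplitude-free version of the estimate first, by induction on the order $k$, and then to reinstate the cutoff $\psi$ at the very end through a single integration by parts; this two-stage structure is exactly what keeps the final constant independent of $\delta$. Throughout I identify $\T$ with a fundamental interval $[a,b]$ and, since $\psi$ is smooth and periodic, work on this interval directly. Concretely, I will first show that if $\abs{q^{(k)}(\xi)}\ge \delta$ on $[a,b]$ then
\[
\left| \int_a^b e^{\im t q(\xi)}\,d\xi \right| \le C_k\,(t\delta)^{-1/k},
\]
with a constant $C_k$ depending only on $k$.

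For the base case $k=1$ I will assume in addition that $q'$ is monotone with $\abs{q'}\ge \mu$, and integrate by parts,
\[
\int_a^b e^{\im t q}\,d\xi = \frac{1}{\im t}\left[\frac{e^{\im t q}}{q'}\right]_a^b - \frac{1}{\im t}\int_a^b e^{\im t q}\,\frac{d}{d\xi}\!\left(\frac{1}{q'}\right)d\xi,
\]
using that $1/q'$ is monotone (hence of total variation at most $2/\mu$) to bound the right-hand side by $C(t\mu)^{-1}$ with an absolute constant. This monotonicity is not an extra hypothesis: inside the induction it is automatic, because $\abs{q^{(k)}}\ge \delta$ forces $q^{(k)}$ to have constant sign, hence $q^{(k-1)}$ to be strictly monotone. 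For the inductive step, assuming the claim for order $k-1$ and $\abs{q^{(k)}}\ge\delta$, the function $q^{(k-1)}$ has at most one zero $\xi=c$. I will split $[a,b]$ into $\{\abs{\xi-c}<\rho\}$, where the integral is bounded trivially by $2\rho$, and its complement, where $\abs{q^{(k-1)}(\xi)}=\abs{q^{(k-1)}(\xi)-q^{(k-1)}(c)}\ge \delta\rho$, so the induction hypothesis applies with $\delta$ replaced by $\delta\rho$ and gives $C_{k-1}(t\delta\rho)^{-1/(k-1)}$. Choosing $\rho\sim (t\delta)^{-1/k}$ balances the two contributions and produces the bound $C_k(t\delta)^{-1/k}$.

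Finally I will reinstate the amplitude. Setting $F(\xi):=\int_a^\xi e^{\im t q(\eta)}\,d\eta$, the amplitude-free estimate applied on each subinterval $[a,\xi]$ gives $\sup_{\xi}\abs{F(\xi)}\le C_k(t\delta)^{-1/k}$, since $\abs{q^{(k)}}\ge\delta$ holds throughout. One integration by parts then yields
\[
\int_\T e^{\im t q}\psi\,d\xi = \left[F\psi\right]_a^b-\int_a^b F(\xi)\,\psi'(\xi)\,d\xi,
\]
whence $\abs{\int_\T e^{\im t q}\psi\,d\xi}\le C_k\,(t\delta)^{-1/k}\,\bigl(\sup\abs{\psi}+\nr{\psi'}_{L^1}\bigr)$. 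The constant thus depends on $k$ and on $\psi$ through $\sup\abs{\psi}$ and $\nr{\psi'}_{L^1}$, but not on $\delta$, as claimed. The main obstacle is the inductive step: one must verify that the optimization in $\rho$ and the excision of the single zero $c$ of $q^{(k-1)}$ go through uniformly in the position of $c$ and in $\delta,t$, so that $C_k$ genuinely depends on $k$ alone. Reintroducing $\psi$ only at the end, rather than carrying smooth cutoffs through the induction, is precisely what avoids a spurious $\rho^{-1}$ loss in the amplitude derivatives and leaves the constant $\delta$-independent.
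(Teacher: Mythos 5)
Your proposal is correct and is precisely the classical Stein--von der Corput argument (induction on $k$ with excision of the zero of $q^{(k-1)}$ and optimization in $\rho$, followed by one integration by parts against the antiderivative $F$ to reinstate $\psi$); the paper itself gives no proof but simply cites Stein's book, where exactly this argument appears. Nothing to correct.
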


\begin{proof}
See \cite{SteinHarmonic}.
\end{proof}

\noindent
From Lemma \ref{lem:1}, we obtain the claim of Theorem \ref{thm:1}.
\end{proof}

As the case of Schr\"odinger equations and discrete Sch\"odinger equaiton (or continuous time QWs), we can derive the Strichartz estimate from dispersive estimate.
We define
\begin{align}\label{12.3.6}
\stz = l^\infty_t(\Z_{\geq 0};\l^2_x(\Z))\cap l^6_t(\Z_{\geq 0};l^\infty_x(\Z)),\quad \stz^*=l^1_t(\Z_{\geq 0};l^2_x(\Z))+l^{6/5}_t(\Z_{\geq 0};l^1_x(\Z)),
\end{align}
where $\|u\|_{l^p_t l^q_x}:=\(\sum_{t\geq 0} (\sum_{x\in \Z} |u(t,x)|^q)^{p/q}\)^{1/p}$ and 
\begin{align*}
\|u\|_{\stz}=\max(\|u\|_{l^\infty_t l^2_x}, \|u\|_{l^6_t l^\infty_x}),\quad \|u\|_{\stz^*}=\inf_{u_1+u_2=u}\(\|u_1\|_{l^1_t l^2_x}+\|u_2\|_{l^{6/5}_tl^1_x}\).
\end{align*}
We further define the weak $l^{p}$ space $l^{p,\infty}$ by its norm
\begin{align*}
\|f\|_{l^{p,\infty}}:=\sup_{\gamma>0} \gamma \(\#\{x\in \Z\ |\ |f(x)|>\gamma\}\)^{1/p},
\end{align*}
where $\#$ is the counting measure.
It is well known that $\|f\|_{l^{p,\infty}}\leq \|f\|_{l^p}$ and moreover we have $\|\<\cdot\>^{-1/p}\|_{l^{p,\infty}}<\infty$ ($\<x\>:=(1+|x|^2)^{1/2}$) and the Young's inequality for weak type spaces
\begin{align}\label{12.3.7}
\|f*g\|_{l^{p_0}}\leq C  \|f\|_{l^{p_1,\infty}}\|g\|_{l^{p_2}},
\end{align}
for $1<p_0,p_1,p_2<\infty$ and $1+p_0^{-1}=p_1^{-1}+p_2^{-1}$ (see Theorem 1.4.24 of \cite{GrafakosC}).

By parallel argument for the proof of Strichartz estimate of free Schr\"odinger equation, we have the following discrete Strichatrz estimate.

\begin{lemma}[Strichartz estimate]\label{lem:stz}
We have
\begin{align*}
\|U_0^t u_0\|_{\stz}\leq C\|u_0\|_{l^2},\quad \|\sum_{s=0}^{t} U_0^{t-s}f(s)\|_{\stz}\leq C\|f\|_{\stz^*},
\end{align*}
where $C>0$ is a constant.
\end{lemma}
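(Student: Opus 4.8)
The plan is to derive the Strichartz estimates for the discrete unitary evolution $U_0^t$ in exactly the same way one proves Strichartz for the free Schr\"odinger group from its dispersive bound, using the $TT^*$ method adapted to the discrete time setting. First I would isolate the two ingredients: the \emph{energy estimate} $\|U_0^t u_0\|_{l^\infty_t l^2_x}\leq \|u_0\|_{l^2}$, which is immediate because each $U_0$ is unitary on $\mathcal H$, so $\|U_0^t u_0\|_{l^2}=\|u_0\|_{l^2}$ for every $t$; and the \emph{dispersive estimate} $\|U_0^t u\|_{l^\infty_x}\leq C t^{-1/3}\|u\|_{l^1_x}$ from Theorem \ref{thm:1}. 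The target exponents are dictated by these two bounds: the $t^{-1/3}$ decay plays the role of the $|t|^{-1/2}$ decay in the continuous one-dimensional case, and the admissible Lebesgue pair becomes $(q,r)=(6,\infty)$, since $1/q=(1/3)(1/2-1/r)$ with $r=\infty$ gives $q=6$. This explains the definition of $\stz$ in \eqref{12.3.6}.

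The key steps I would carry out are as follows. I would first establish the homogeneous estimate $\|U_0^t u_0\|_{l^6_t l^\infty_x}\leq C\|u_0\|_{l^2}$ by the standard duality/$TT^*$ argument: this is equivalent to the bound $\|\sum_{s}U_0^{t-s}f(s)\|_{l^6_t l^\infty_x}\leq C\|f\|_{l^{6/5}_t l^1_x}$ for the doubly-integrated operator, which in turn follows from interpolating the dispersive decay against the trivial $L^2$ bound and then applying the discrete Hardy--Littlewood--Sobolev inequality in the time variable. Concretely, writing $T f(t)=\sum_{s\geq 0}U_0^{t-s}f(s)$, the kernel satisfies $\|U_0^{t-s}f(s)\|_{l^\infty_x}\leq C|t-s|^{-1/3}\|f(s)\|_{l^1_x}$, so $\|Tf(t)\|_{l^\infty_x}\leq C\sum_s |t-s|^{-1/3}\|f(s)\|_{l^1_x}=C(\langle\cdot\rangle^{-1/3}* \|f\|_{l^1_x})(t)$ up to handling the diagonal $t=s$ term separately. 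Since $\|\langle\cdot\rangle^{-1/3}\|_{l^{3,\infty}}<\infty$, the weak Young inequality \eqref{12.3.7} with $p_1=3$, $p_2=6/5$, $p_0=6$ (so that $1+1/6=1/3+5/6$) gives $\|Tf\|_{l^6_t l^\infty_x}\leq C\|f\|_{l^{6/5}_t l^1_x}$, which is the $l^{6/5}l^1\to l^6 l^\infty$ piece. The energy estimate similarly handles the $l^1_t l^2_x\to l^\infty_t l^2_x$ and related endpoints by the unitarity of $U_0$ together with Minkowski's inequality in the time variable.

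For the full inhomogeneous estimate $\|\sum_{s=0}^t U_0^{t-s}f(s)\|_{\stz}\leq C\|f\|_{\stz^*}$ I would decompose $f=f_1+f_2$ according to the definition of $\stz^*$ and bound the two contributions separately in each of the two norms comprising $\stz$, using the four mixed-norm bounds obtained above (two from the energy estimate, two from the dispersive estimate, plus the two cross terms by interpolation/duality). Summing the resulting four estimates and taking the infimum over decompositions $f=f_1+f_2$ yields the claim. I expect the main technical obstacle to be the careful bookkeeping of the discrete time sums, in particular isolating the diagonal term $t=s$ (where the dispersive bound degenerates) and justifying that the retarded sum $\sum_{s=0}^t$, rather than a sum over all $s\in\Z$, does not spoil the weak-Young estimate; this is handled by extending $f$ by zero and noting that the retarded kernel is dominated by $|t-s|^{-1/3}$ away from the diagonal, while the single diagonal term contributes a bounded operator $U_0^0=I$ controlled directly by the energy estimate.
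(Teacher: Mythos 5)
Your proposal is correct and follows essentially the same route as the paper: interpolating the $t^{-1/3}$ dispersive bound of Theorem \ref{thm:1} against $l^2$ conservation, running the $TT^*$/duality argument with the weak-type Young inequality \eqref{12.3.7} in the time variable, and treating the retarded sum and the cross terms by duality and interpolation. The only cosmetic difference is that the paper works with the full interpolated family $(p_\theta,q_\theta)$ throughout rather than isolating the endpoint pair $(6,\infty)$ first, but the substance is identical.
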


\begin{proof}
We set $(p_\theta,q_\theta)$ by $(p_\theta^{-1},q_\theta^{-1})=(\frac{\theta}{6},\frac{1-\theta}{2})$ for $\theta\in [0,1]$.
By interpolation \cite{BLBook} between Theorem \ref{thm:1} and $l^2$ conservation, we have $\|U_0^t u_0\|_{l^{q_\theta}}\leq C\<t\>^{-\theta/3}\|u_0\|_{l^{q_\theta'}}$.
Here, $q_\theta' =\frac{q_\theta}{1-q_\theta}$ is the H\"older conjugate.
We first show the dual estimate:
\begin{align}\label{12.3.8}
\|\sum_{s=0}^\infty U_0^{-s}f(s)\|_{l^2}\leq C\|f\|_{\stz^*}.
\end{align}
Setting the inner product $\<\cdot,\cdot\>$ by $$\<f,g\>:=\sum_{x\in \Z}\<f(x),g(x)\>_{\C^2}=\sum_{x\in \Z}\(f_1(x)\overline{g_1(x)}+f_2(x)\overline{g_2(x)}\),$$
for $\theta\in[0,1]$, we have
\begin{align}
\|\sum_{s=0}^\infty U_0^{-s} f(s) \|_{l^2}^2 
&=\<\sum_{s=0}^\infty U_0^{-s}f(s),\sum_{t=0}^\infty U_0^{-t}f(t)\>
=\sum_{t=0}\<\sum_{s=0}^\infty U_0^{t-s}f(s),f(t)\>\label{12.3.9}
\\&\leq \sum_{t=0}^\infty \sum_{s=0}^\infty \|U_0^{t-s} f(s) \|_{l^{q_\theta}_x} \|f(t)\|_{l^{q_\theta'}_x}
\leq C \sum_{t=0}^\infty \(\sum_{s=0}^\infty \<t-s\>^{-\theta/3} \|f(s)\|_{l^{q_\theta'}_x}\) \|f(t)\|_{l^{q_\theta'}_x}\nonumber
\\&\leq C\left\|\<\cdot\>^{-\theta/3}* \|f(\cdot)\|_{l^{q_\theta'}_x} \right\|_{l^{p_\theta}_t} \|f\|_{l^{p_\theta'}_tl^{q_\theta'}_x}\leq C\|\<\cdot\>^{-\frac{\theta}{3}}\|_{l^{\frac{3}{\theta},\infty}}\|f\|_{l^{p_\theta'}_tl^{q_\theta'}_x}^2\leq C\|f\|_{l^{p_\theta'}_tl^{q_\theta'}_x}^2,\nonumber
\end{align}
where we have used \eqref{12.3.7} in the third line.
Therefore, we have \eqref{12.3.8}.
Notice that by the same argument we have
\begin{align}\label{12.3.10}
\|\sum_{s=0}^t U_0^{t-s}f(s)\|_{l^\infty_t l^2_x}+\|\sum_{s=t}^\infty U_0^{t-s}f(s)\|_{l^\infty_t l^2_x}\leq C \|f\|_{l^{p_\theta'}_tl^{q_\theta'}_x}.
\end{align}
The first claim follows from a duality argument using \eqref{12.3.8}.
\begin{align*}
\|U_0^t u_0 \|_{\stz}=\sup_{\|f\|_{\stz^*}\leq 1}  \sum_{t=0}^\infty \<U_0^t u_0, f\>=\sup_{\|f\|_{\stz^*}\leq 1}   \< u_0, \sum_{t=0}^\infty U_0^{-t} f\>\leq C\sup_{\|f\|_{\stz^*}\leq 1}\|u_0\|_{l^2} \|f\|_{\stz^*}\leq C\|u_0\|_{l^2}.
\end{align*}
We show the second inequality (inhomogeneous Strichartz).
For $\theta\in [0,1]$, applying the argument of \eqref{12.3.9}, we have
\begin{align}\label{12.3.11}
\|\sum_{s=0}^t U_0^{t-s} f(s)\|_{l^{p_\theta} l^{q_\theta}}\leq C \|\<\cdot\>^{-\frac{\theta}{3}}\|_{l^{\frac{3}{\theta},\infty}}\|f\|_{l^{p_\theta'}l^{q_\theta'}}\leq C \|f\|_{l^{p_\theta'}l^{q_\theta'}}.
\end{align}
Combining \eqref{12.3.10} and \eqref{12.3.11} applied for $\theta=1$, we have
\begin{align}\label{12.3.12}
\|\sum_{s=0}^t U_0^{t-s} f(s)\|_{\stz}\leq C  \|f\|_{l^{6/5}_tl^{1}_x}.
\end{align}
By \eqref{12.3.10}, for $(p,q)=(p_\theta,q_\theta)$,
\begin{align}
\|\sum_{s=0}^t U_0^{t-s}f(s)\|_{l^p_t l^q_x}&=\sup_{\|g\|_{l^{p'}_tl^{q'}_x}\leq 1} \sum_{t=0}^\infty\< \sum_{s=0}^t U_0^{t-s}f(s),g(t)\>=\sup_{\|g\|_{l^{p'}_tl^{q'}_x}\leq 1}\sum_{s=0}^\infty\<  f(s),\sum_{t=s}^\infty U_0^{s-t}g(t)\>\nonumber\\
&\leq \sup_{\|g\|_{l^{p'}_tl^{q'}_x}\leq 1}\|f\|_{l^1_tl^2_x} \|\sum_{t=s}^\infty U_0^{s-t}g(t) \|_{l^\infty_t l^2_x}\leq C \|f\|_{l^1_tl^2_x}.
\end{align}
Therefore, by interpolation, we have the conclusion.
\end{proof}

If we only use Strichartz estimate, we can only handle the case $\|C_N(s_1,s_2)\|_{\C^2\to \C^2}\leq C\(  s_1^3+s_2^3\)$.
To lower the power of the nonlinearity, we adapt the idea of Mielke and Patz \cite{MP10AA} (see also \cite{MP12DCDS}).

\begin{theorem}[Improved decay estimate]\label{thm:mp}
Let $0<|a|<1$.
Then, there exists $C>0$ s.t.\ we have
\begin{align}\label{12.4}
\| U_0^t u_0 \|_{l^{4,\infty}}\leq C \<t\>^{-1/4}\|u_0\|_{l^1}.
\end{align}
\end{theorem}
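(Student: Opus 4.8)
The plan is to prove the improved decay estimate \eqref{12.4} by a refined stationary phase analysis that separates the contribution of the degenerate stationary points of the phase. The key observation driving the improvement from $t^{-1/3}$ to $t^{-1/4}$ (in the weak norm) is that the $t^{-1/3}$ rate in Theorem~\ref{thm:1} comes only from the isolated points $\xi_0$ where $p''(\xi_0)=0$, while away from these points one has the faster $t^{-1/2}$ decay. By localizing near and far from the degenerate points and measuring the output in the weak space $l^{4,\infty}$ rather than $l^\infty$, one can capture this non-uniform behavior.

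First I would, as in \eqref{101}, reduce to estimating the two oscillatory integrals $I_\pm(t,s)$ with phase $\phi_\pm(\xi)=\pm p(\xi)+s(\xi-\theta_a)$, now viewed as functions of the real parameter $s=x/t$. The stationary points are where $\phi_\pm'(\xi)=\pm p'(\xi)+s=0$, i.e.\ $p'(\xi)=\mp s$. Using \eqref{12.2} and \eqref{12.3.2}, the phase is degenerate (second derivative vanishing) precisely at the $s$-values corresponding to $\cos\xi=0$, which are isolated. Away from such $s$, the standard stationary phase / Van der Corput estimate with $k=2$ gives a pointwise bound $|I_\pm(t,x/t)|\le C t^{-1/2}|p''(\xi_s)|^{-1/2}$, and one tracks how $|p''|^{-1/2}$ blows up as $s$ approaches the critical values. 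At the degenerate $s$ one falls back on the $k=3$ estimate giving $t^{-1/3}$. The main step is then to combine these pointwise bounds into a weak-$l^4$ estimate: the set of $x$ for which $|I_\pm(t,x/t)|$ exceeds a threshold $\gamma$ is controlled by how large the neighborhood of the degenerate $s$-values must be, and summing (counting) over such $x$ yields the $l^{4,\infty}$ bound with the $\langle t\rangle^{-1/4}$ rate. This is the structure of the Mielke--Patz argument \cite{MP10AA} adapted to the present phase $p$.

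Concretely, I would change variables to write $I_\pm$ as a function of $s$, split $\T$ into a region where $|p''(\xi)|\gtrsim \delta$ and a shrinking region around the zeros of $p''$, and estimate the contribution of each. In the non-degenerate region Van der Corput ($k=2$) controls the oscillatory integral with a gain depending on the distance of $s$ to the critical value; in the degenerate region one uses $k=3$. The decay of $I_\pm(t,x/t)$ as a function of $x$ is then piecewise, and one verifies that the level sets $\{x : |I_\pm(t,x/t)| > \gamma\}$ have cardinality $\lesssim (\gamma \langle t\rangle^{1/4})^{-4}$ uniformly in $\gamma$, which is exactly the definition of the $l^{4,\infty}$ bound once combined with $\|U_0^t u_0\|_{l^{4,\infty}} \le \|I_\pm(t,\tfrac{\cdot}{t})\|_{l^{4,\infty}}\|u_0\|_{l^1}$ via the convolution structure \eqref{101} and Young's inequality \eqref{12.3.7}.

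The hard part will be the careful bookkeeping of how $|p''(\xi_s)|^{-1/2}$ degenerates as $s$ sweeps past the critical values coming from $\cos\xi=0$, and translating a pointwise-in-$s$ decay rate that is $t^{-1/2}$ generically but only $t^{-1/3}$ at finitely many points into the global weak-norm rate $\langle t\rangle^{-1/4}$. In particular one must check that the exponent $4$ is exactly the threshold at which the contribution of the degenerate points becomes summable, and that no logarithmic loss appears; this balancing of the transition region width against the threshold $\gamma$ is the crux. I expect that the explicit formulas \eqref{12.2}--\eqref{12.3.3} make the location and order of the degeneracies completely transparent, so that the estimate reduces to a one-dimensional scaling computation near each degenerate $\xi$, but assembling these into the weak-$l^4$ count is where the real work lies.
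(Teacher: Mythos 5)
Your proposal follows essentially the same route as the paper's proof: reduce to the kernel $I_\pm(t,\cdot/t)$ via \eqref{101} and the weak-type Young inequality, establish the pointwise bound $|I_{\pm,ij}(t,s)|\leq C t^{-1/2}(1+|s^2-|a|^2|^{-1/4})$ reflecting the degeneration of $p''$ at the stationary points corresponding to $s=\pm|a|$, and convert this into the level-set count $\#\{x:|I_{\pm,ij}(t,x/t)|>\gamma\}\lesssim \gamma^{-4}t^{-1}$, exactly as in the Mielke--Patz scheme the paper adapts. The paper carries out the ``hard part'' you flag by an explicit non-stationary-phase splitting with optimized window width $\delta_1$ rather than quoting a stationary phase expansion with constants, but this is the same estimate in substance.
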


\begin{proof}
By \eqref{101} and Young's inequality for weak type spaces (see Theorem 1.2.13 of \cite{GrafakosC}):
\begin{align*}
\|f*g\|_{l^{p,\infty}}\leq C \|f\|_{l^{p,\infty}} \|g\|_{l^1},
\end{align*}
 it suffices to show that for $t\geq 1$,
\begin{align*}
\max_{1\leq i,j\leq 2}\left\| I_{\pm,ij}(t,\frac{\cdot}{t})\right\|_{l^{4,\infty}}\leq C t^{-1/4},\quad t\geq 1,
\end{align*}
where $I_{\pm}$ is given in \eqref{100} and $I_{\pm,ij}$ are the $(i,j)$ matrix component of $I_{\pm}$.
Further, for $|s|\geq 1$, we can show $|I(t,s)|\leq C (ts)^{-1}$ using integration by parts.
Therefore,
\begin{align*}
\|I_{\pm,ij}(t,\frac{\cdot}{t})\|_{l^4(|x|\geq t)}^4\leq C \sum_{|x|\geq t}|x|^{-4}\leq C t^{-3}.
\end{align*}
Thus, it suffices to show that for each $1\leq i,j\leq 2$, we have
\begin{align}\label{12.6}
\#\left\{x\in \Z_{|\cdot|\leq t} \ |\ \left| I_{\pm,ij}(t,\frac{x}{t})\right|>\gamma\right\}\leq C \gamma^{-4}t^{-1},
\end{align}
where $\Z_{|\cdot|\leq t}:=\{x\in \Z\ |\ |x|\leq t\}$.
Notice that if $\gamma\leq C  t^{-1/2}$, then \eqref{12.6} is automatically satisfied.
Thus, it suffices to consider the case $\gamma\gg t^{-1/2}$.
For $|s|\leq 1$, we claim
\begin{align}\label{13}
\left| I_{\pm,ij}(t,s)\right|\leq C  t^{-1/2}\(1+ \left|s^2 - |a|^2\right|^{-1/4}\).
\end{align}
\begin{remark}
This statement corresponds to Lemma 3.6 of \cite{MP10AA} and (4.12) of \cite{MP12DCDS}.
\end{remark}
%with $\xi_{\pm,0} = -\theta_a \pm \pi/2$ (we have $p''(\xi_{\pm,0})=0$).
If we have \eqref{13}, for $\gamma\gg t^{-1/2}$, we obtain \eqref{12.6} from
\begin{align*}
x\in  \left\{x\in \Z\ |\ \left| I_{\pm,ij}(t,\frac{x}{t})\right|>\gamma\right\} &\Rightarrow \gamma \leq C  t^{-1/2}\(1+ \left|(x/t)^2 - |a|^2\right|^{-1/4}\)  \Rightarrow \min_{\pm}(x \pm |a|t) \leq C  \gamma^{-4}t^{-1}.
\end{align*}
Thus, it suffices to show \eqref{13}.
Further, since we already have the global bound \eqref{12.3.1}, it suffices to consider the case
\begin{align}\label{14}
|s^2-|a|^2|\sim \min_{\pm}|s\pm |a| |\gtrsim t^{-2/3}.
\end{align}

For the case $|s|>|a|-t^{-3/2}$, we have $|\pm p'(\xi)+s|\geq |s|-|a|$.
Hence, integration by parts twice, we have
\begin{align*}
|I_{\pm,ij}(t,s)|\leq C  t^{-2}(|s|-|a|)^{-2}\leq C  t^{-1/2}\(1+ \left|s^2 - |a|^2\right|^{-1/4}\).
\end{align*}
For the case $|s|<|a|-t^{-3/2}$, we only consider $I_{+,11}$ and write $Q_{+,11}$ as $Q$ for simplicity.
Without loss of generality, we can assume $s\geq0$.
Set $\delta(s)>0$ so that $p'(-\frac{\pi}{2}\pm \delta(s))+s=0$.

We now fix $\delta_0\ll1$.
Then, if $\delta_0<|a|-|s|$, we have $|p'(\xi)+s|\geq \delta_1$ in $A_{\delta_1}:=\T\setminus \cup_{\pm}(-\frac{\pi}{2}\pm \delta(s)-\delta_1,-\frac{\pi}{2}\pm \delta(s)+\delta_1)$.
Thus, we have
\begin{align}
|I_{\pm,11}(t,s)|&\leq 
\left|\int_A e^{\im t(p(\xi)+s(\xi-\theta_a))}Q(\xi)\,d\xi\right|+\sum_{\pm}\left|\int_{-\frac{\pi}{2}\pm \delta(s)-\delta_1}^{-\frac{\pi}{2}\pm \delta(s)+\delta_1}e^{\im t(p(\xi)+s(\xi-\theta_a))}Q(\xi)\,d\xi\right|\nonumber\\&\leq C  \delta_1+(t \delta_1)^{-1}.\label{15}
\end{align}
Thus, if we take $\delta_1=t^{-1/2}$ in $A(\delta_1)$, we have \eqref{13}.

For the case $t^{-3/2}< |a|-|s|<\delta_0$, we claim 
\begin{align}\label{16}
|p'(\xi)+s|\gtrsim \(|a|^2-|s|^2\)^{1/2}\delta_1+\delta_1^2,\quad\text{for } \xi\in A(\delta_1).
\end{align}
If we have \eqref{16}, take $\delta_1=(|a|^2-|s|^2)^{-1/4}t^{-1/2}$.
Then, by \eqref{14}, we have $\delta_1\leq C  (|a|^2-s^2)^{1/2}$ and we can replace r.h.s.\ of \eqref{16} by $\(|a|^2-|s|^2\)^{1/2}\delta_1$.
Therefore, estimate of \eqref{15} with $(t \delta_1)^{-1}$ replaced by $(t(|a|^2-|s|^2)^{1/2}\delta_1)^{-1}$ will give us \eqref{13}.

Finally, we show \eqref{16}.
First, notice that as $s\to |a|$, $\delta(s)\to 0$.
Therefore, if $0<|a|-|s|<\delta_0\ll1$, we have $0<\delta(s)\ll1$ and
\begin{align*}
p'(-\pi/2 + \delta(s))+s= -|a|+s +\frac 1 2|a|(1-|a|^2)\delta(s)^2+o(\delta(s)^2),
\end{align*}
where we have used $p''(-\pi/2)=0$ and $p'''(-\pi/2)=|a|(1-|a|^2)$.
This gives us 
\begin{align}
\delta(s)= \sqrt{\frac{2(|a|-s)}{|a|(1-|a|^2)}}+o(\sqrt{|a|-s})
\end{align}
Since $\inf_{\xi\in A(\delta_1)}|p'(\xi)+s|$ is given by the minimum of $|p'\(-\pi/2 \pm(\delta(s)+\delta_1)\)+s|$, we have
\begin{align*}
|p'(-\pi/2 \pm(\delta(s)+\delta_1))+s|&=p'(-\pi/2 \pm \delta(s))+p''(-\pi/2 \pm \delta(s))(\pm \delta_1)+\frac 1 2 p'''(-\pi/2 \pm \delta(s))\delta_1^2\\&=
 \frac{1}{2}|a|(1-|a|^2)\(\delta(s)\delta_1+\delta_1^2\)+o(\delta_1^2). 
\end{align*}
Therefore, we have \eqref{16}.
\end{proof}

\section{Scattering}

We now prove Theorem \ref{thm:scat}

\begin{proof}[Proof of Theorem \ref{thm:scat} 1.]
We first estimate the Strichartz norm.
Set $\Phi:l^\infty_t l^2_x\to l^\infty_t l^2_x$ by
\begin{align*}
\Phi(u)(t)=U_0^t u_0+\sum_{s=0}^{t-1}U_0^{t-s} \(\hat C_N-I_2\)u(s).
\end{align*}
Notice that $U(t)u_0$ is the unique solution of \eqref{3} if and only if it is a fixed point of $\Phi$.
We show that if $\delta:=\|u_0\|_{l^2}\ll1$, then $\Phi$ has an fixed point.
Indeed, by lemma \ref{lem:stz}
\begin{align*}
\|\Phi(0)\|_{\stz}\leq C  \|u_0\|_{l^2},
\end{align*}
and
\begin{align*}
&\|\Phi(u)-\Phi(v)\|_{\stz}\leq C  \| U_0(\hat C-I_2)u-U_0(\hat C-I_2)v \|_{\stz^*}\leq C  \|(\hat C-I_2)u-(\hat C-I_2)v\|_{l^1l^2}\\&\leq C  \sum_{t\geq 0} \(\sum_{x\in \Z} \| C_N(|u_1(t,x)|^2,|u_2(t,x)|^2)-I_2\|_{\C^2\to\C^2}^2 \| u(t,x)-v(t,x)\|_{\C^2}^2\)^{1/2} \\&\quad+ C\sum_{t\geq 0} \(\sum_{x\in \Z} \|C_N(|u_1(t,x)|^2,|u_2(t,x)|^2)-C_N(|u_1(t,x)|^2,|u_2(t,x)|^2)\|_{\C^2\to\C^2}^2 \| v(t,x)\|_{\C^2}^2\)^{1/2}\\&\leq C 
\left\| \(\|u\|_{\C^2}^{6}+\|v\|_{\C^2}^{6}\) \| u-v\|_{\C^2}\right\|_{l^1_tl^2_x}\leq C  \left\| \|u\|_{\C^2}^6+\|v\|_{\C^2}^6 \right\|_{l^1_t l^\infty_x} \|u-v \|_{l^\infty_t l^2_x}
\\&
\leq C  
\(\|u\|_{\stz}^6+\|v\|_{\stz}^6\)\|u-v\|_{\stz}.
\end{align*}
Thus, if we set $\mathcal B:=\{ u\in \stz\ |\ \|u\|_{\stz}\leq 2 \|u_0\|_{l^2}=2 \delta\}$,
we see that $\Phi:\mathcal B\to \mathcal B$ is a contraction mapping, provided $\delta>0$ sufficiently small.
Therefore, there exists a unique $u$ s.t. $\Phi(u)=u$, which is actually $U(t)u_0$.
Further, since
\begin{align*}
U_0^{-t}U(t) u_0 = u_0 +\sum_{s=0}^{t-1}U_0^{-s}\(\hat C_N-I_2\)u(s),
\end{align*}
if we can show the right hand side is Cauchy in $\mathcal H$, we have the conclusion.
However, we already have the Strichartz bound $\|u\|_{\stz}\ll1$, so by the dual Strichartz estimate \eqref{12.3.8}, we have
\begin{align*}
\|\sum_{s=0}^{\infty} U_0^{-s}\(\hat C_N-I_2\)u(s)\|_{l^2}\leq C  \|u\|_{\stz}^7<\infty,
\end{align*}
and we conclude
\begin{align*}
\|\sum_{s=t_1}^{t_2}U_0^{-s}\(\hat C_N-I_2\)u(s)\|_{l^2}\to 0,\quad t_1\to \infty.
\end{align*}
Thus,
\begin{align*}
u_0 +\sum_{s=0}^{t-1}U_0^{-s-1}\(\hat C_N-I_2\)u(s)\to u_+\text{ in }l^2\text{ as }t\to \infty.
\end{align*}
Therefore, we have the conclusion.
\end{proof}

To show Theorem \ref{thm:scat} 2., we first show the decay of $l^5$ norm.
Notice that $\<t\>^{-4/15}$ is the same rate for the decreasing of linear solution, which is obtained by interpolation between $l^\infty$-$l^1$ estimate (Theorem \ref{thm:1}) and $l^{4,\infty}$-$l^1$ estimate (Theorem \ref{thm:mp}).
\begin{lemma}\label{lem:qdec}
Under the assumption of Theorem \ref{thm:scat} 2., there exists $\delta>0$ such that if $\|u(0)\|_{l^1}\leq \delta$, we have $\|U(t) u_0\|_{l^5} \leq C  \<t\>^{-4/15}\|u_0\|_{l^1}$.
\end{lemma}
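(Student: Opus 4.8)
The plan is to establish the claimed $l^5$ decay by a bootstrap argument built on an $l^1 \to l^5$ dispersive estimate together with the Duhamel representation of the nonlinear flow. Throughout, write $u(t) = U(t)u_0$, which is globally defined by the recursion \eqref{3} since $U$ preserves the $l^2$ norm; in particular $\|u(t)\|_{l^5} \le \|u(t)\|_{l^2} = \|u_0\|_{l^2} \le \|u_0\|_{l^1}$, so that every quantity entering the argument is finite and the induction below has a valid base case.

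First I would record the linear $l^1 \to l^5$ estimate. Since $l^5 = (l^{4,\infty}, l^\infty)_{1/5,5}$, real interpolation (keeping the source space $l^1$ fixed) between Theorem \ref{thm:mp} and Theorem \ref{thm:1} gives
\begin{align*}
\|U_0^t u\|_{l^5} \le C \langle t\rangle^{-4/15}\|u\|_{l^1},
\end{align*}
the decay exponent being $\tfrac45\cdot\tfrac14 + \tfrac15\cdot\tfrac13 = \tfrac{4}{15}$, exactly as anticipated in the remark preceding the lemma. Next, under the hypothesis $m=2$ the coin bound $\|C_N(s_1,s_2)-I_2\|_{\C^2\to\C^2}\le c_0(s_1+s_2)^2$ yields the pointwise estimate $\|(\hat C_N - I_2)u(x)\|_{\C^2} \le c_0\|u(x)\|_{\C^2}^5$, whence $\|(\hat C_N - I_2)u\|_{l^1} \le c_0\|u\|_{l^5}^5$; the quintic power is matched precisely to the $l^5$ norm. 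Feeding both facts into the Duhamel formula $u(t) = U_0^t u_0 + \sum_{s=0}^{t-1} U_0^{t-s}(\hat C_N - I_2)u(s)$ (from the proof of Theorem \ref{thm:scat}, part 1) gives
\begin{align*}
\|u(t)\|_{l^5} \le C\langle t\rangle^{-4/15}\|u_0\|_{l^1} + C\sum_{s=0}^{t-1}\langle t-s\rangle^{-4/15}\|u(s)\|_{l^5}^5.
\end{align*}

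I would then set $M(t) := \max_{0\le s\le t}\langle s\rangle^{4/15}\|u(s)\|_{l^5}$ and bound $\|u(s)\|_{l^5}^5 \le M(t-1)^5\langle s\rangle^{-4/3}$ for $s\le t-1$; crucially, only times $s \le t-1$ occur in the sum, so the right-hand side involves $M(t-1)$ rather than $M(t)$, avoiding any circularity. Invoking the convolution estimate
\begin{align*}
\sum_{s=0}^{t-1}\langle t-s\rangle^{-4/15}\langle s\rangle^{-4/3} \le C\langle t\rangle^{-4/15}
\end{align*}
yields the closed inequality $M(t) \le C\|u_0\|_{l^1} + C M(t-1)^5$. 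Since $M(0)=\|u_0\|_{l^5}\le\|u_0\|_{l^1}$ is small, a routine induction (for $\delta:=\|u_0\|_{l^1}$ small enough that $C(2C\delta)^5 \le C\delta$) keeps $M(t)\le 2C\|u_0\|_{l^1}$ for every $t$, which is exactly the asserted bound.

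The step I expect to carry the weight is the convolution estimate, together with the exponent bookkeeping that explains why $m=2$ already suffices. Splitting the sum at $s = t/2$: on $s \le t/2$ one has $\langle t-s\rangle \gtrsim \langle t\rangle$ while $\sum_s \langle s\rangle^{-4/3} < \infty$ (this is where $4/3 = 20/15 > 1$, i.e.\ the quintic power, is essential), giving the main contribution $\lesssim \langle t\rangle^{-4/15}$; on $s > t/2$ one has $\langle s\rangle^{-4/3} \lesssim \langle t\rangle^{-4/3}$ and $\sum \langle t-s\rangle^{-4/15} \lesssim \langle t\rangle^{11/15}$, giving $\lesssim \langle t\rangle^{-3/5}$, which decays faster. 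The input that makes the whole scheme close at the quintic rather than the septic level is the improved $l^{4,\infty}$ decay of Theorem \ref{thm:mp}; relying on the Strichartz estimate alone would only reach $m=3$, as remarked before that theorem.
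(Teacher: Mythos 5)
Your proposal is correct and follows essentially the same route as the paper: the interpolated $l^1\to l^5$ dispersive bound with rate $\langle t\rangle^{-4/15}$, the pointwise quintic coin estimate, the Duhamel representation, the convolution inequality $\sum_{s=0}^{t-1}\langle t-s\rangle^{-4/15}\langle s\rangle^{-4/3}\leq C\langle t\rangle^{-4/15}$ proved by splitting at $s=t/2$, and a bootstrap/induction closed by choosing $\delta$ small. The only cosmetic difference is that you package the induction through the quantity $M(t)$ rather than assuming the decay bound directly for $0\leq s\leq t-1$ as the paper does; the content is identical.
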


\begin{proof}
We prove by induction.
Suppose $\|u(0)\|_{l^1}= \delta$.
Then, by Theorem \ref{thm:mp}, there exists $c_1>0$ s.t.\ $\|U_0^t u(0)\|_{l^5}\leq c_1 \<t\>^{-4/15} \|u(0)\|_{l^1}$.
We assume that for $0\leq s\leq t-1$, we have $\|u(s)\|_{l^5}\leq 2c_1 \delta\<s\>^{-4/15}$.
Then, we have
\begin{align*}
\<t\>^{4/15}\|u(t)\|_{l^5}&\leq c_1 \delta + \sum_{s=0}^{t-1}\<t\>^{4/15}c_1\<t-s-1\>^{-4/15}\|u(s)^5\|_{l^1}\\&\leq
c_1 \delta +  (2 c_1 \delta)^5 c_1\sum_{s=0}^{t-1}\<t\>^{4/15}\<t-s-1\>^{-4/15}\<s\>^{-4/3}.
\end{align*}
Notice that 
\begin{align*}
\sum_{s=0}^{t-1}\<t\>^{4/15}\<t-s-1\>^{-4/15}\<s\>^{-3/4}\leq c_2
\end{align*}
with some absolute constant $c_2>0$.
Indeed,
\begin{align*}
\sum_{s=0}^{t-1}\<t-s-1\>^{-4/15}\<s\>^{-3/4}&\sim \int_0^{t}\frac{1}{(1+t-s)^{4/15}}\frac{1}{(1+s)^{4/3}}
\\&\leq C  \<t\>^{-4/15}\int_0^{t/2}\<s\>^{-4/3}+\<t\>^{-4/3}\int_{t/2}^t (1+t-s)^{-4/15}\,ds
\\&\leq C  \<t\>^{-4/15}+\<t\>^{-4/3}\<t\>^{-4/15+1}\leq C   \<t\>^{-4/15}.
\end{align*}
Thus, if we take $0<\delta$ to satisfy $\delta<(2c_1 c_2^{1/5})^{-1}$, we have
\begin{align*}
c_1 \delta(1 +  32c_1^5\delta^5	c_2)< 2c_1 \delta.
\end{align*}
Therefore, we have the conclusion.
\end{proof}

We can show scattering by using decay.

\begin{proof}[Proof of Theorem \ref{thm:scat} 2.]
Let $\|u(0)\|_{l^1}\leq \delta$, where $\delta>0$ given by Lemma \ref{lem:qdec}. Then we have $\|u(t)\|_{l^5}\leq C  \<t\>^{-4/15}$.
Thus, $\|u\|_{l^{24/5}l^5(t,\infty)}\to 0$ as $t\to \infty$.
Therefore, we have
\begin{align*}
\|u\|_{\stz(T,\infty)}&\leq C  \|u(T)\|_{l^2}+C\|\sum_{s=T}^{t-1} U_0^{t-s}(\hat C_N-I_2)u(s)\|_{\stz}\\& \leq C  \|u(T)\|_{l^2} + C\| U_0(\hat C-I_2)u\|_{l^{6/5}l^1}\\&
\leq C  \|u(T)\|_{l^2} + C\|u\|_{l^{24/5}l^8(T,\infty)}^4 \|u\|_{l^\infty l^2}\\&
\leq C  \|u(T)\|_{l^2} + C\|u\|_{l^{24/5}l^5(T,\infty)}^4 \|u\|_{\stz}
\end{align*}
Therefore, we see that $\stz$ norm is finite.
Since we can bound $\|\sum_{s=0}^\infty U_0^{-s}(\hat C_N-I_2)u(s)\|_{l^2}$ by the same estimate, we have the conclusion.
\end{proof}

\section{Inverse scattering}
In this section,
we prove Theorem \ref{thm:invscat}.
Henceforth, 
we assume \eqref{A}.
we first prepare the following lemma:

\begin{lemma}\label{lem:2:thm:is:1}
Let $\delta>0$ sufficiently small.
Then, for any $u_0 \in  l ^1$ 
with $\|u_0\|_{ l ^1}\le \delta$,
\begin{align}\label{ineq:1:lem:2:thm:is:1}
\| U(t)u_0 - U_0^t u_0 \|_{ l ^\infty  l ^2}
\leq C  \|u_0\|_{ l ^1}^5
\end{align} 
and 
\begin{align}\label{ineq:2:lem:2:thm:is:1}
\| \(\hat C_N-I_2\) U(t)u_0 - \(\hat C_N-I_2\) U_0^t u_0 \|_{ l ^{6/5}  l ^1}
\leq C  \|u_0\|_{ l ^1}^9.
\end{align} 
\end{lemma}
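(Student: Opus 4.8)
The plan is to build everything on the Duhamel identity for the nonlinear flow and two pointwise bounds on the nonlinearity. Write $u(t)=U(t)u_0$, $v(t)=U_0^tu_0$, and $N(w):=\(\hat C_N-I_2\)w$, so that $N(w)(x)=\(C_N(|w_1(x)|^2,|w_2(x)|^2)-I_2\)w(x)$. Since $U=U_0\hat C_N$, exactly as in the proof of Theorem \ref{thm:scat} one has
\begin{align*}
u(t)-v(t)=\sum_{s=0}^{t-1}U_0^{t-s}N(u)(s).
\end{align*}
Under \eqref{A} we have $\|C_N(s_1,s_2)-I_2\|_{\C^2\to\C^2}=\|\tilde C_N(s_1^2,s_2^2)-I_2\|_{\C^2\to\C^2}\le C(s_1^2+s_2^2)$, which gives the quintic bound $\|N(w)(x)\|_{\C^2}\le C\|w(x)\|_{\C^2}^5$. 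Moreover, decomposing $N(w_1)-N(w_2)=\(C_N(\cdots)-I_2\)(w_1-w_2)+\(C_N(|w_{1,1}|^2,|w_{1,2}|^2)-C_N(|w_{2,1}|^2,|w_{2,2}|^2)\)w_2$ and estimating the last difference by the fundamental theorem of calculus together with the chain rule $\partial_{s_j}C_N(s_1,s_2)=2s_j\,(\partial_j\tilde C_N)(s_1^2,s_2^2)$ (the factor $\partial_j\tilde C_N$ being bounded since the arguments stay in a compact set, as $\|w_i(x)\|_{\C^2}\le\|u_0\|_{l^2}\le\delta$), I get the Lipschitz bound
\begin{align*}
\|N(w_1)(x)-N(w_2)(x)\|_{\C^2}\le C\(\|w_1(x)\|_{\C^2}+\|w_2(x)\|_{\C^2}\)^4\|w_1(x)-w_2(x)\|_{\C^2}.
\end{align*}
This is the only genuinely computational step, and it is routine.

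For \eqref{ineq:1:lem:2:thm:is:1} I would apply the inhomogeneous Strichartz estimate of Lemma \ref{lem:stz} to the Duhamel formula, using $\|\cdot\|_{l^\infty l^2}\le\|\cdot\|_{\stz}$ and $\|N(u)\|_{\stz^*}\le\|N(u)\|_{l^{6/5}l^1}$. The quintic bound yields $\|N(u)(t)\|_{l^1}\le C\|u(t)\|_{l^5}^5$, hence $\|N(u)\|_{l^{6/5}l^1}\le C\|u\|_{l^6l^5}^5$. The decay $\|u(t)\|_{l^5}\le C\<t\>^{-4/15}\|u_0\|_{l^1}$ from Lemma \ref{lem:qdec} makes the time series converge, because $\tfrac{4}{15}\cdot6=\tfrac85>1$, giving $\|u\|_{l^6l^5}\le C\|u_0\|_{l^1}$ and therefore $\|u-v\|_{l^\infty l^2}\le\|u-v\|_{\stz}\le C\|u_0\|_{l^1}^5$.

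For \eqref{ineq:2:lem:2:thm:is:1} I would feed the first estimate into the Lipschitz bound. Setting $a:=\|u\|_{\C^2}+\|v\|_{\C^2}$ and applying H\"older in $x$ (splitting $a^4\cdot\|u-v\|_{\C^2}$ as an $l^2\cdot l^2$ product),
\begin{align*}
\|N(u)(t)-N(v)(t)\|_{l^1}\le C\|a(t)\|_{l^8}^4\,\|u(t)-v(t)\|_{l^2}.
\end{align*}
Bounding $\|u(t)-v(t)\|_{l^2}\le\|u-v\|_{l^\infty l^2}\le C\|u_0\|_{l^1}^5$ uniformly in $t$ and taking the $l^{6/5}_t$ norm gives $\|N(u)-N(v)\|_{l^{6/5}l^1}\le C\|u_0\|_{l^1}^5\,\|a\|_{l^{24/5}l^8}^4$. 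It then suffices to check $\|a\|_{l^{24/5}l^8}\le C\|u_0\|_{l^1}$: for both solutions I use $\|\cdot(t)\|_{l^8}\le\|\cdot(t)\|_{l^5}\le C\<t\>^{-4/15}\|u_0\|_{l^1}$ (for $u$ by Lemma \ref{lem:qdec}, for $v$ by interpolating Theorem \ref{thm:1} and Theorem \ref{thm:mp} as noted before Lemma \ref{lem:qdec}), and the time series converges since $\tfrac{4}{15}\cdot\tfrac{24}{5}>1$. This contributes a factor $\|u_0\|_{l^1}^4$, so the exponent becomes $5+4=9$, as claimed.

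The main obstacle is not any single estimate but the exponent bookkeeping: one must pick the space--time H\"older exponents so that every time series built from the $l^5$ and $l^8$ decay rates is summable and the powers of $\|u_0\|_{l^1}$ assemble to exactly $5$ and $9$. The decisive structural point is that the degree-$5$ difference $u-v$ furnished by the first estimate enters the Lipschitz bound of the second estimate multiplied by four additional solution factors, upgrading the $l^1$-smallness from order $5$ to order $9$.
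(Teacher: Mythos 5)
Your proposal is correct and follows essentially the same route as the paper: Duhamel's formula plus the inhomogeneous Strichartz estimate of Lemma \ref{lem:stz}, the quintic bound and the quartic Lipschitz bound on $\hat C_N-I_2$ coming from \eqref{A}, and the $l^5$ decay of Lemma \ref{lem:qdec} (together with its linear counterpart for $U_0^tu_0$) to close the exponent count at $5$ and $9$. The only differences are cosmetic --- you use the H\"older split $\|\,|u|^5\|_{l^{6/5}_tl^1_x}\leq \|u\|_{l^6_tl^5_x}^5$ where the paper uses $\|u\|_{l^{24/5}_tl^5_x}^4\|u\|_{l^\infty_tl^2_x}$, and you are somewhat more explicit about the chain rule for $\tilde C_N$ and about why $\|U_0^tu_0\|_{l^{24/5}l^5}\lesssim\|u_0\|_{l^1}$.
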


\begin{proof}
By Lemma \ref{lem:stz}, 
Assumption \eqref{A} and the H\"older inequality, 
we have 
\begin{align*}
\| U(t)u_0 - U_0^t u_0 \|_{ l ^\infty  l ^2}
=&
\| \sum_{s=0}^{t-1}U_0^{t-s}\(\hat C_N-I_2\) U(s)u_0 \|_{ l ^\infty  l ^2}
\leq C 
\| \(\hat C_N-I_2\) U(t)u_0 \|_{ l ^{6/5} l ^1}
\\
\leq C 
&
\| | U(t)u_0 |^5 \|_{ l ^{6/5} l ^1}
\leq C 
\| U(t)u_0 \|_{ l ^{24/5} l ^5}^4 
\| U(t)u_0 \|_{ l ^\infty  l ^2}
\leq C 
\| u_0 \|_{ l ^1}^5.
\end{align*} 
Thus, we obtain \eqref{ineq:1:lem:2:thm:is:1}. 
\\

Let $u_1=\ve{u_{11}}{u_{12}} \in  l ^1$ 
and $u_2=\ve{u_{21}}{u_{22}} \in  l ^1$.
Then we have for any $x\in \Z$, 
\begin{align*}
&
\(\hat C_N-I_2\) u_1(x) - \(\hat C_N-I_2\) u_2(x)
\\
&=
\braa{\wc\braa{|u_{11}(x)|^4,|u_{12}(x)|^4}-I_2}u_1
-
\braa{\wc\braa{|u_{21}(x)|^4,|u_{22}(x)|^4}-I_2}u_2
\\
&=
\braa{\wc\braa{|u_{11}(x)|^4,|u_{12}(x)|^4}-I_2}(u_1(x)-u_2(x))
\\
&\quad +
\brab{
\braa{\wc\braa{|u_{11}(x)|^4,|u_{12}(x)|^4}-I_2}
-
\braa{\wc\braa{|u_{21}(x)|^4,|u_{22}(x)|^4}-I_2}
}u_2(x).
\end{align*} 
Using \eqref{A}, 
we obtain for any $x\in \Z$,
\begin{align*}
&
\abs{
\(\hat C_N-I_2\) u_1(x) - \(\hat C_N-I_2\) u_2(x)
}_{\C^2}
\\
&\leq C 
\braa{
|u_{11}(x)|^4+|u_{12}(x)|^4
}
\abs{ u_1(x)-u_2(x) }_{\C^2}
\\
&\quad + C
\braa{
|u_{11}(x)|^4-|u_{21}(x)|^4+|u_{21}(x)|^4-|u_{22}(x)|^4
}
|u_2(x)|_{\C^2}
\\
&\leq C 
\braa{
|u_1(x)|_{\C^2} + |u_2(x)|_{\C^2}
}^4
\abs{ u_1(x)-u_2(x) }_{\C^2}.
\end{align*} 
Hence it follows that 
\begin{align*}
&
\left\|
\(\hat C_N-I_2\) U(t)u_0 - \(\hat C_N-I_2\) U_0^t u_0 
\right\|_{ l ^{6/5} l ^1}
\\&\leq C 
\braa{
  \nr{U(t)  u_0}_{ l ^{24/5} l ^5} 
+ \nr{U_0^t u_0}_{ l ^{24/5} l ^5}
}^4
  \nr{ U(t)u_0-U_0^t u_0 }_{ l ^\infty  l ^2}
\leq C 
\nr{u_0}_{ l ^1}^9,
\end{align*} 
where we have used \eqref{ineq:1:lem:2:thm:is:1} 
and Lemma \ref{lem:qdec} in the last line.
This completes the proof of  \eqref{ineq:2:lem:2:thm:is:1}.
\end{proof}

\begin{proof}[Proof of Theorem \ref{thm:invscat}]
Let $u_0\in  l ^1$ with $\nr{u_0}_{ l ^1}\le \delta$ 
and let $v_0 \in  l ^1$.
We see from the proof of Theorem \ref{thm:scat} that 
the nonlinear operator $W^\ast$ satisfies 
\begin{align*}
\wa u_0 = u_0 + \sum_{t=0}^\infty U_0^{-t}\(\hat C_N-I_2\) U(t)u_0
\end{align*} 
and
\begin{align*}
\bra{ \wa u_0 - u_0 , v_0}
=
\sum_{t=0}^\infty \bra{ \(\hat C_N-I_2\) U(t)u_0 ,U_0^t v_0}.
\end{align*} 
By Lemmas \ref{lem:stz} and \ref{lem:2:thm:is:1}, 
we obtain 
\begin{align*}
&
\abs{
\bra{ \wa u_0 - u_0 , v_0}
-
\sum_{t=0}^\infty \bra{ \(\hat C_N-I_2\) U_0^t u_0 ,U_0^t v_0}
}
\\
&
\leq C 
\nr{ \(\hat C_N-I_2\) U(t)u_0-\(\hat C_N-I_2\) U_0^t u_0 }_{ l ^{6/5} l ^1}
\nr{ U_0^t v_0 }_{ l ^{6  } l ^\infty}
\\
&\leq C 
\nr{u_0}_{ l ^1}^9 
\nr{v_0}_{ l ^2}.
\end{align*}
Replacing $u_0$ and $v_0$ 
by $U_0u_0$ and $U_0v_0$, respectively, 
we have
\begin{align*}
&
\abs{
\bra{ U_0^{-1}\wa U_0 u_0 -u_0 , v_0}
-
\sum_{t=1}^\infty \bra{ \(\hat C_N-I_2\) U_0^t u_0 ,U_0^t v_0}
}
\\
&=
\abs{
\bra{ \wa U_0 u_0 - U_0 u_0 , U_0 v_0}
-
\sum_{t=0}^\infty \bra{ \(\hat C_N-I_2\) U_0^{t+1} u_0 ,U_0^{t+1} v_0}
}
\\
&\leq C 
\nr{U_0u_0}_{ l ^1}^9 
\nr{U_0v_0}_{ l ^2}
\leq C 
\nr{u_0}_{ l ^1}^9 
\nr{v_0}_{ l ^2},
\end{align*}
which implies that 
\begin{align*}
&
\abs{
\bra{ \braa{U_0^{-1}\wa U_0 -\wa}u_0 , v_0}
-
\bra{ \(\hat C_N-I_2\) u_0 ,v_0}
}
\leq C 
\nr{u_0}_{ l ^1}^9 
\nr{v_0}_{ l ^2}.
\end{align*} 
In particular, for $k=1,2$ and $l=1,2$
we see that 
\begin{align}
&
\left|
\< \braa{U_0^{-1}\wa U_0 -\wa}
 \(\lambda^{1+k}\delta_{1,0}+\lambda^{4-k}\delta_{2,0}\),
\delta_{l,0}\>
-
\< \(\hat C_N-I_2\) \(\lambda^{1+k}\delta_{1,0}+\lambda^{4-k}\delta_{2,0}\),
\delta_{l,0}\>
\right|
\nonumber\\
&\leq C 
\left\|\lambda^{1+k}\delta_{1,0}+\lambda^{4-k}\delta_{2,0}\right\|_{ l ^1}^9 
\|\delta_{l,0}\|_{ l ^2}
\leq C  
\lambda^{18}\label{ineq:1:proof}
\end{align} 
for any $\lambda>0$ sufficiently small.
By the Taylor theorem and \eqref{A}, 
we have 
\begin{align*}
&\< \(\hat C_N-I_2\) \(\lambda^{1+k}\delta_{1,0}+\lambda^{4-k}\delta_{2,0}\),
\delta_{l,0}\>
=
\bra{ \braa{\wc(\lambda^{4(1+k)},\lambda^{4(4-k)})-I_2}
\ve{\lambda^{1+k}}{\lambda^{4-k}},
\delta_{l,0}}_{\C^2}
\\
&=
\bra{ \Pa_1\wc(0,0)\lambda^{4(1+k)} \ve{\lambda^{1+k}}{\lambda^{4-k}},
\delta_{l,0}}_{\C^2}
+
\bra{ \Pa_2\wc(0,0)\lambda^{4(4-k)} \ve{\lambda^{1+k}}{\lambda^{4-k}},
\delta_{l,0}}_{\C^2} 
\\
&\quad +O(\lambda^{18})
\quad (\lambda\to+0).
%\\
%&\quad +
%\sum_{|\alpha|=2}\frac{2}{\alpha!}
%\bra{ \Pa^\alpha \wc(\theta\lambda^8,\theta\lambda^{12})
%(\lambda^8,\lambda^{12})^\alpha
%\ve{\lambda^2}{\lambda^3},
%\mathbf{e}_k}_{\C^2}, 
%\quad \lambda>0, \ k=1,2
\end{align*} 
%for some $\theta \in (0,1)$.
We now define 
$\(\partial_k \tilde C_N(0,0)\)_{ij}=\<\partial_k \tilde C_N(0,0) e_j,e_i\>_{\C^2}$.
We see from (\ref{ineq:1:proof}) that 
\begin{align*}
\mathcal L_{1j}(\lambda)
&=
\lambda^{-10}
\< \(\hat C_N-I_2\) \(\lambda^{2}\delta_{1,0}+\lambda^{3}\delta_{2,0}\),
\delta_{j,0}\>
+O(\lambda^8)
\\
&=
\bra{ \Pa_1\wc(0,0)\ve{1}{\lambda},
\delta_{j,0}}_{\C^2}
+O(\lambda^4)
\\
&=
\(\Pa_1\wc(0,0)\)_{j1}+\lambda\(\Pa_1\wc(0,0)\)_{j2}
+O(\lambda^4)
\quad (\lambda\to+0)
\end{align*}
and
\begin{align*}
\mathcal L_{2j}(\lambda)
&=
\lambda^{-10}
\< \(\hat C_N-I_2\) \(\lambda^{3}\delta_{1,0}+\lambda^{2}\delta_{2,0}\),
\delta_{j,0}\>
+O(\lambda^8)
\\
&=
\bra{ \Pa_2\wc(0,0)\ve{\lambda}{1},
\delta_{j,0}}_{\C^2}
+O(\lambda^4)
\\
&=
\(\Pa_2\wc(0,0)\)_{j2}+\lambda\(\Pa_2\wc(0,0)\)_{j1}
+O(\lambda^4)
\quad (\lambda\to+0)
\end{align*}
for $j=1,2$.

%\begin{align*}
%\mt{c^1_{11}}{c^1_{12}}{c^1_{21}}{c^1_{22}}
%=
%\Pa_1\wc(0,0),
%\quad
%\mt{c^2_{11}}{c^2_{12}}{c^2_{21}}{c^2_{22}}
%=
%\Pa_2\wc(0,0).
%\end{align*} 

Hence we have
\begin{align*}
\(\Pa_1\wc(0,0)\)_{j2}&=D_\lambda \mathcal L_{j1}+O(\lambda^3), \quad \(\Pa_2\wc(0,0)\)_{j1}=D_\lambda \mathcal L_{j2}+O(\lambda^3),\\
\(\Pa_1\wc(0,0)\)_{j1}&=\mathcal L_{1j}(\lambda)-\lambda D_\lambda \mathcal L_{1j}+O(\lambda^4), \quad \(\Pa_2\wc(0,0)\)_{j2}=\mathcal L_{2j}-\lambda D_\lambda \mathcal L_{2j}+O(\lambda^4),
\end{align*}
%\begin{align*}
%\mL_k(\lambda)
%&=c^1_{k1}+O(\lambda),
%\\
%\lambda^{-1}\vd \mL_k(\lambda)
%&=c^1_{k2}+O(\lambda^3),
%\\
%\lambda^{-4}\vd^4 \mL_k(\lambda)
%&=4!c^2_{k1}+O(\lambda),
%\\
%\lambda^{-5}\vd^5 \mL_k(\lambda)
%&=5!c^2_{k2}+O(\lambda^3) \quad \lambda\to+0,
%\end{align*} 
which completes the proof.
\end{proof}

\section*{Acknowledgments}  
M.M. was supported by the JSPS KAKENHI Grant Numbers JP15K17568, JP17H02851 and JP17H02853.
H.S. was supported by JSPS KAKENHI Grant Number JP17K05311.
E.S. acknowledges financial support from 
the Grant-in-Aid for Young Scientists (B) and of Scientific Research (B) Japan Society for the Promotion of Science (Grant No.~16K17637, No.~16K03939).
A. S. was supported by JSPS KAKENHI Grant Number JP26800054. 
K.S acknowledges JSPS the Grant-in-Aid for Scientific Research (C) 26400156.

\medskip

Masaya Maeda, Hironobu Sasaki

Department of Mathematics and Informatics,
Faculty of Science,
Chiba University,
Chiba 263-8522, Japan

{\it E-mail Address}: {\tt maeda@math.s.chiba-u.ac.jp, sasaki@math.s.chiba-u.ac.jp}

\medskip

Etsuo Segawa

Graduate School of Information Sciences, 
Tohoku University,
Sendai 980-8579, Japan

{\it E-mail Address}: {\tt e-segawa@m.tohoku.ac.jp}

\medskip

Akito Suzuki

Division of Mathematics and Physics,
Faculty of Engineering,
Shinshu University,
Nagano 380-8553, Japan

{\it E-mail Address}: {\tt akito@shinshu-u.ac.jp}

\medskip

Kanako Suzuki

College of Science, Ibaraki University,
2-1-1 Bunkyo, Mito 310-8512, Japan

{\it E-mail Address}: {\tt kanako.suzuki.sci2@vc.ibaraki.ac.jp}

\end{document}